%% file: main.tex
\documentclass[11pt]{article}

\pdfoutput=1

\usepackage{fullpage}
\usepackage[utf8]{inputenc}
\usepackage{algorithm,algpseudocode,amsmath, amssymb,amsthm,authblk,braket,latexsym,mathtools,optidef}
\usepackage[pdftex]{graphicx}
\usepackage{hyperref}
\usepackage[dvipsnames]{xcolor}
\usepackage{tcolorbox}

\newcommand{\NP}{\mathsf{NP}}
\newcommand{\PLang}{\mathsf{P}}

\newcommand{\AM}{\mathsf{AM}}

\newcommand{\MIP}{\mathsf{MIP}}

\newcommand{\RE}{\mathsf{RE}}

\newcommand{\NEXP}{\mathsf{NEXP}}
\newcommand{\PCP}{\mathsf{PCP}}
\newcommand{\IP}{\mathsf{IP}}
\newcommand{\PSPACE}{\mathsf{PSPACE}}

\newcommand{\poly}{\mathrm{poly}}
\newcommand{\polylog}{\mathrm{polylog}}

\newcommand{\cX}{\mathcal{X}}
\newcommand{\cY}{\mathcal{Y}}
\newcommand{\cA}{\mathcal{A}}
\newcommand{\cB}{\mathcal{B}}
\newcommand{\sV}{\mathsf{V}}

\newcommand{\val}{\mathsf{val}}
\newcommand{\eps}{\varepsilon}

\usepackage[capitalise,nameinlink]{cleveref}
\hypersetup{colorlinks={true},linkcolor={blue},citecolor=red}
\usepackage{thmtools}

\newtheorem{theorem}{Theorem}[section]
\newtheorem{definition}[theorem]{Definition}
\newtheorem{lemma}[theorem]{Lemma}
\newtheorem{claim}[theorem]{Claim}
\newtheorem{fact}[theorem]{Fact}
\newtheorem{corollary}[theorem]{Corollary}
\newtheorem{conjecture}[theorem]{Conjecture}

\title{Multi-Prover Interactive Proof Systems with Leakage}

\author[1]{Vahid R. Asadi\thanks{vrasadi@nii.ac.jp}}
\author[2]{Atsuya Hasegawa\thanks{atsuya.hasegawa@math.nagoya-u.ac.jp}}
\author[2]{Fran{\c{c}}ois Le Gall\thanks{legall@math.nagoya-u.ac.jp}}

\affil[1]{\textit{National Institute of Informatics, Japan}}
\affil[2]{\textit{Graduate School of Mathematics, Nagoya University, Japan}}

\date{}

\begin{document}

\maketitle

\begin{abstract}
    It is known that there exist multi-prover interactive protocols ($\mathsf{MIP}$ protocols) for the complexity class $\mathsf{NEXP}$, succinct $\mathsf{MIP}$ protocols for $\mathsf{NP}$ and multi-prover interactive protocols with shared entanglement ($\mathsf{MIP}^\ast$ protocols) for $\mathsf{RE}$. This extraordinary power of multi-prover interactive proof systems comes from the assumption that provers do not communicate with each other during the protocols. If they are allowed to communicate freely, the setting is the same as in the single-prover case, and the computational power of the system becomes significantly weaker.

    In this paper, we investigate for the first time the setting where communication (i.e., leakage of information) between provers is allowed but bounded. We introduce two techniques to approach this question and show that multi-prover interactive proof systems are robust against some amount of leakage. Our first technique is based on parallel repetition theorems. We apply it to show that for any polynomial $p$, we can construct two-prover one-round $\mathsf{MIP}$ and $\mathsf{MIP}^\ast$ protocols for $\mathsf{NEXP}$ and $\mathsf{RE}$, respectively, that are robust against $p(n)$ bits of leakage. We further derive our second technique to convert any low-soundness PCP construction to a two-prover one-round $\mathsf{MIP}$ protocol for $\mathsf{NP}$ robust against leakage. We also discuss the relation between robustness against leakage in multi-prover interactive proof systems and the Sliding Scale Conjecture in the PCP literature. 
\end{abstract}

\clearpage

\tableofcontents

\clearpage

\input{intro}
\input{preliminary}
\input{proof}
\input{summary}

\section*{Acknowledgments}
VRA thanks Shuichi Hirahara and Nobutaka Shimizu for helpful discussions. This research was done when VRA was a visiting researcher at Nagoya University, and a graduate student at the University of Waterloo. AH thanks Srijita Kundu for useful discussions. The authors also thank the anonymous referees for their helpful comments and suggestions.

VRA is supported by a JSPS Postdoctoral Fellowship for Research in Japan. AH is supported by JSPS KAKENHI grant No.~24H00071, 25K24674, 25K24465. FLG is supported by JSPS KAKENHI grant No.~24H00071, 25K24674, 25K24465, MEXT Q-LEAP grant No.~JPMXS0120319794, JST ASPIRE grant No.~JPMJAP2302 and JST CREST grant No.~JPMJCR24I4.

\bibliographystyle{alpha}
\bibliography{ref}

\end{document}

%% file: intro.tex
\section{Introduction}

\subsection{Background}
Multi-prover interactive proof systems ($\MIP$s) were first introduced by Ben-Or et al.\! \cite{ben-or1988multi} to remove intractability assumptions from zero-knowledge proof systems \cite{goldwasser1989knowledge,goldreich1991proofs}. In a multi-prover interactive proof system, there are multiple provers who interact with a verifier. The provers cannot communicate with each other during the execution of the protocol, and thus the verifier can ``cross-check'' their assertions. In 1990, Babai, Fortnow and Lund \cite{babai1991mip} showed $\MIP=\NEXP$, where $\NEXP$ is a set of problems recognized by non-deterministic exponential-time Turing machines. In $\MIP$ protocols, the polynomial-time verifier checks exponentially large proofs, and the result of $\MIP=\NEXP$ led to the discovery of the celebrated PCP theorem \cite{arora1998probabilistic,arora1998proof}.

The class $\MIP^*$, which is a variant of $\MIP$ where provers can share entanglement,  was first introduced by Cleve et al.\! \cite{cleve2004consequences} who derived a correspondence with quantum non-local games. Since in non-local games, quantum values are larger than classical values (i.e., $\omega_q(G) \geq \omega_c(G)$) and thus soundness in $\MIP$ protocols does not imply one in $\MIP^*$ protocols, the relation between $\MIP$ and $\MIP^*$ was mysterious. After an extensive effort to investigate the power of $\MIP^*$ (e.g., \cite{kempe2011entangled,ito2012multi,reichardt2013classical,natarajan2019neexp}), Ji et al.\! \cite{ji2020mip*} proved $\MIP^*=\RE$. The $\RE$-complete problem is the Halting Problem which is undecidable by any Turing machine, and shared entanglement between the provers gives extremely strong power to the multi-prover interactive proof systems.

The extraordinary power of multi-prover interactive proof systems comes from the canonical assumption that provers do not communicate with each other during the execution of the protocol\footnote{Before the protocol starts, they can talk freely and share some strategy.}. If provers are allowed to communicate freely, the proof system is the same as in the case of the single prover, and the power of the proof system becomes much weaker because $\IP=\PSPACE$ \cite{shamir1992ip}. If we limit the number of rounds to a constant, it is still the case that there exists a two-prover one-round $\MIP$ (resp. $\MIP^*$) protocol for $\NEXP$ (resp. $\RE$). However, in the single-prover setting ($\IP$), the class is equal to $\AM$ \cite{goldwasser1986private,babai1988arthur}, which is much weaker than the multi-prover case.

An important line of research in the study of multi-prover interactive proofs is related to the problem of reducing the soundness error to any arbitrary small value. This question motivated the study of the parallel repetition theorem \cite{raz1998parallel, hol2009parallel} in which the goal is to reduce the soundness error to any arbitrary value by asking the provers to play multiple instances of the same game simultaneously. It was first shown by \cite{raz1998parallel} that the soundness error decays exponentially with respect to the number of copies of the game that are played in parallel or at the same time.

The parallel repetition theorem also plays an important role in the study of probabilistically checkable proofs (PCPs), where it is used to reduce the soundness error at the cost of increasing the proof length and alphabet size.
However, parallel repetition is not the only approach. Gap amplification \cite{dinur2007pcp,bafna2025quasi} provides a more efficient way to increase the unsatisfiability gap, thereby reducing the soundness error with smaller overhead.
Moreover, similar effects can be achieved through careful composition of PCPs \cite{moshkovitz2008two,dinur2015lowerrorPCP}.

In this work, we formally show a connection between the no-communication assumption in multi-prover interactive proofs, and the parallel repetition and the low soundness error for PCPs. More precisely, we will use the latter to relax the former and show that if the soundness error is chosen to be a carefully small value, one can allow some bits\footnote{
For $\MIP^*$ protocols, quantum communication (qubits) can always be replaced by classical communication (bits) via teleportation \cite{bennett1993teleporting} since the provers share entanglement. In this paper, we thus only consider classical communication between provers, even when discussing $\MIP^*$ protocols.} of communication between the provers.

\subsection{Our contribution}

As we have discussed, there are huge gaps in the computational power of multi-prover interactive proof systems with and without communication between provers. Moreover, when we want to realize these systems physically, say for some cryptographic applications, enforcing physical separation may be considered an unrealistic assumption, and in practice, 
a small amount of information could leak from the devices corresponding to the provers.

These facts motivate us to ask the following question:
\begin{quote}
    \emph{Assume some bounded amount of communication is allowed between the provers. Then, what is the computational power of the multi-prover interactive proof systems?}
\end{quote}

Note that it is necessary to assume there is a meaningful bound on the amount of leakage between the provers. In particular, if the amount of leakage is greater than the size of the questions the verifier asks, then trivially the power of the protocol comes down to the power of single-prover interactive proofs. 

In this work, we will introduce two techniques to answer this question.
\begin{itemize}
    \item Our first tool, which uses the power of parallel repetition theorems, can be used to show that it is indeed the case that we can allow the provers to have bounded communication, while preserving the power of multi-prover settings. Of course, there will be a cost to pay, and it shows up in the size of the question and answer sets of the non-local game, but as we will discuss shortly, the blow-up in the question and answer size is moderate.
    \item Our second technique exploits low-soundness PCP constructions (achieved from gap amplification), and we show how to convert them to leakage-resilient succinct $\MIP$ protocols for $\NP$. We achieve better parameters for question, answer, leakage size, and soundness, although this comes at the cost of additional assumptions or restrictions.
\end{itemize}
Our techniques allow us to show that the extraordinary power of multi-prover interactive proof systems still holds even if we weaken the assumption of no communication.

This, to the best of our knowledge, is the first time one utilizes the power of soundness error reductions towards relaxing the no-communication assumption in the multi-prover interactive proof systems. We will also show connections between the possibility of certain multi-prover interactive proof systems with leakage and the Sliding Scale Conjecture in the PCP literature, with the hope that one can find resolutions for either using techniques from the other.

\subsection{Technical overview}\label{sec:overview}
For the rest of this technical overview, we fix the following notation. We denote by $\MIP[q,a,\ell]$ (resp. $\MIP^*[q,a,\ell]$) the class of problems decided by a two-prover one-round $\MIP$ (resp. $\MIP^*$) protocol whose question length is $q$ and answer length is $a$, and the amount of allowed communication between the provers is $\ell$ (see \cref{sec:def of MIP with leakage} for formal definitions). Here, $n$ always denotes the instance size of the problem, and for simplicity, we assume the alphabet is $\{0,1\}$.

\subsubsection{Techniques based on parallel repetition}

Our first contribution is to show how to use parallel repetition theorems for classical and quantum non-local games to remove the no-communication assumption between provers in a multi-prover interactive proof system. 

We utilize this technique to first show the following result for $\MIP$.

\begin{theorem}[Informal version of \cref{thm:mip leakage}]\label{thm:main classical informal theorem}
Even if any $\poly(n)$ bits of communication is allowed between the provers, there exists a two-prover one-round $\MIP$ protocol for $\NEXP$.
\end{theorem}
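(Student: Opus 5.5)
The plan is to start from a standard two-prover one-round $\MIP$ protocol for $\NEXP$ with constant soundness error, apply parallel repetition to push the soundness error well below $2^{-\ell}$, and then argue that $\ell$ bits of leakage can be simulated without any communication at a multiplicative cost of at most $2^{\ell}$ in winning probability.

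\textbf{The base protocol.} Fix a two-prover one-round protocol for $\NEXP$ (e.g.\ from \cite{babai1991mip} together with standard reductions). It should have
\begin{itemize}
  \item perfect completeness, $\omega_c(G_x)=1$ for yes-instances;
  \item constant soundness $\omega_c(G_x)\le 1-\delta$ for no-instances;
  \item $\poly(n)$ question and answer lengths.
\end{itemize}
For a polynomial leakage bound $p(n)$, we form the $k$-fold parallel repetition $G_x^{\otimes k}$ with $k=\poly(n)$ chosen below. By Raz's theorem \cite{raz1998parallel}, in the form of Holenstein \cite{hol2009parallel}, the value satisfies $\omega_c(G_x^{\otimes k})\le (1-\delta^{c})^{\Omega(k/a)}$. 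Here $a=\poly(n)$ is the answer length and $c$ is an absolute constant. Choosing $k$ a large enough polynomial makes this at most $2^{-p(n)-1}$. Completeness is preserved, and the new question and answer lengths grow by a factor of $k$, so they remain polynomial.

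\textbf{The leakage-removal step.} Consider any classical strategy for the repeated game in which the provers exchange at most $\ell=p(n)$ bits, possibly over multiple rounds and adaptively, before answering. By fixing shared randomness, we may assume the strategy is deterministic.
\begin{itemize}
  \item The full communication transcript $\tau$ takes at most $2^{\ell}$ values.
  \item For each fixed transcript $\tau$, define a no-communication strategy: each prover, on its own question, answers as it would if the transcript were $\tau$. This is well defined because each prover's messages and answer are functions of its own question and the transcript.
  \item The provers jointly guess $\tau$ uniformly using shared randomness. With probability at least $2^{-\ell}$, the guess equals the true transcript on the actual question pair, and on that event both provers behave exactly as in the leaky strategy.
\end{itemize}
It follows that $\omega_{\text{no-comm}}\ge 2^{-\ell}\,\omega_{\text{leak}}$, so $\omega_{\text{leak}}\le 2^{\ell}\,\omega_c(G_x^{\otimes k})\le 1/2$. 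The guessing argument is legitimate because a transcript consistent with both provers' local views is exactly what the communication would have produced. Thus with $\ell=p(n)$ bits of leakage we still have completeness $1$ and soundness $1/2$, which gives the theorem.

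\textbf{Main obstacle.} I expect the hard part to be this leakage-removal lemma, in particular handling interactive, adaptive communication of total length $\ell$. The approach above treats the whole transcript as a single $\ell$-bit string, and I would check carefully that adaptivity introduces no extra factor.

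The second point to check is the quantitative parallel repetition bound. The exponent must be at least linear in $k/\poly(n)$ so that a polynomial $k$ suffices, and Holenstein's bound indeed gives decay $2^{-\Omega(\delta^{3}k/a)}$. Finally, the repeated game must stay one-round with polynomial-time verification, which is immediate because the verifier just checks all $k$ coordinates.
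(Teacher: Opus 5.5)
Your proposal is correct and follows the paper's route. You use parallel repetition to push the no-communication value below $2^{-\ell-1}$, then shared-randomness transcript guessing (the argument of \cref{lem:classical parallel repetition theorem with leakage}) costs only a factor $2^{\ell}$. The one difference is the base game: the paper starts from the succinct protocol of \cref{fact:MIP=NEXP succint}, whose answers have $O(1)$ bits, so $O(\ell)$ repetitions suffice and the answer size is $O(\ell)$. Your polynomial-answer base game instead needs $k=\poly(n)\cdot\ell$ repetitions and gives polynomially larger answers, which is still enough for the informal statement.
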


Note that in the protocol above, the question size is some asymptotically larger polynomial than the leakage size, and the answer size is a sufficiently large linear multiple of the leakage size.

Second, considering the parallel repetition of quantum non-local games, we show the following result for $\MIP^*$.

\begin{theorem}[Informal version of \cref{thm:mip* with leakage}]\label{thm:main quantum informal theorem}
Even if any $\poly(n)$ bits of communication is allowed between the provers, there exists a two-prover one-round $\MIP^*$ protocol for $\RE$.
\end{theorem}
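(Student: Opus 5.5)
We prove the statement by combining the $\MIP^*=\RE$ protocol of Ji et al.\ with a quantum parallel repetition theorem. The reduction rests on one observation: any strategy using $\ell$ bits of (classical, possibly adaptive) communication can be simulated without communication, at a multiplicative loss of $2^{-\ell}$ in winning probability. Concretely, the two provers use shared randomness (part of the entangled state) to guess the full $\ell$-bit transcript in advance. Each prover then runs the leaky strategy as if the guessed transcript were the true one, checking consistency with its own outgoing messages. With probability at least $2^{-\ell}$ the guess is correct, and conditioned on this the behaviour coincides with the leaky strategy. Consequently
\[
\omega^*_{\ell}(G)\le 2^{\ell}\,\omega^*(G),
\]
where $\omega^*_\ell$ is the entangled value with $\ell$ bits of leakage. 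Measurements in the leaky strategy may depend on the received messages; this causes no difficulty, because the simulating prover fixes the guessed incoming bits and applies the corresponding measurement. Where incoming messages are a function of earlier measurement outcomes, the guess fixes these as well, so the simulation is still a valid non-communicating strategy. Making this guessing argument precise for interleaved quantum measurements is the step I expect to need the most care. To handle it, I would model the leaky strategy as a sequence of instruments indexed by the transcript prefix, and sum the success probability over transcripts $t$ of the probability that the verifier accepts and the transcript equals $t$. Each term is realized by a non-communicating strategy that hard-codes $t$, so each term is at most $\omega^*(G)$, giving the bound $2^{\ell}\omega^*(G)$ after averaging against the uniform guess.

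Next, take the two-prover one-round $\MIP^*$ protocol for $\RE$ from Ji et al., with completeness $1$ and constant soundness $s<1$. Its question and answer lengths are $\poly(n)$; as in that work, we may take the game to be a synchronous/anchored game, or apply anchoring, so that a quantum parallel repetition theorem with exponential decay applies. Using the anchored parallel repetition theorem of Bavarian--Vidick--Yuen, $k$-fold repetition gives
\[
\omega^*(G^{\otimes k})\le \exp(-c k)
\]
for a constant $c>0$ depending on $s$ and the anchoring parameters.

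Given a polynomial $p$, choose $k=C\,p(n)$ with $C$ large enough that $2^{p(n)}e^{-ck}\le 1/3$. The repeated protocol has question and answer lengths $k\cdot\poly(n)=\poly(n)$. The verifier runs in polynomial time, since it performs $k$ independent checks.

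Completeness is preserved: the honest provers play the perfect strategy coordinatewise, ignoring the leakage channel, and win with probability $1$. Soundness follows from the observation above: for no-instances,
\[
\omega^*_{p(n)}(G^{\otimes k})\le 2^{p(n)}\omega^*(G^{\otimes k})\le 1/3.
\]
Both completeness and soundness are therefore preserved even when $p(n)$ bits of communication are allowed, which proves the theorem.
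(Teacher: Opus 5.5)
Your proposal is correct, but in the quantum case it takes a different route from the paper. The paper never passes through the non-communicating value for $\MIP^*$. It takes the Natarajan--Zhang free game for the Halting Problem (\cref{lem:qfg}) and applies the Jain--Kundu direct product theorem for entanglement-assisted communication (\cref{lem:quantum dpt}). That theorem bounds the success of any protocol exchanging $cN$ bits on $G^{\otimes N}$ directly, provided the question distribution is a product distribution.

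You instead lift the paper's own classical argument (\cref{lem:classical parallel repetition theorem with leakage}) to the entangled setting. Transcript guessing gives $\omega^*_\ell(G)\le 2^\ell\omega^*(G)$, and your justification is sound. For a fixed transcript $t$, the probability of $(t,a,b)$ is $\mathrm{tr}[(\Lambda^{x,t}_a\otimes\Gamma^{y,t}_b)(\rho)]$, where the two compositions of local instruments act on different tensor factors. This is exactly the probability that the hard-coded non-communicating strategies both pass their consistency checks and output $(a,b)$. You then combine this with anchoring and Bavarian--Vidick--Yuen applied to the Ji et al.\ game.

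One step needs correcting. The Bavarian--Vidick--Yuen bound, like \cref{lem:parallel_repetition}, decays as $\exp(-\Omega(\poly(\alpha,\eps)\,k/a))$, where $a$ is the answer length of the base game. For the Ji et al.\ game $a=\poly(n)$, so your $c$ is not an $n$-independent constant, and $k=Cp(n)$ with constant $C$ does not suffice. You need $k=\Theta(p(n)\,a(n))$. This is still polynomial, so your conclusion survives.

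The trade-off between the two routes is as follows.
\begin{itemize}
\item \textbf{Your route} is more elementary and modular. It works for arbitrary games after anchoring, and needs neither a communication-specific direct product theorem nor a product-distribution assumption. The cost is that the $2^\ell$ loss must be paid against a decay rate diluted by the answer length. The number of repetitions therefore scales like $\ell\cdot a(n)$, giving polynomially worse question and answer lengths.
\item \textbf{The paper's route} relies on Jain--Kundu tolerating a constant fraction of a bit of communication per copy, independent of the answer length. It needs only $N=\Omega(\max\{\ell,\log(|\cA||\cB|)\})$ repetitions. Combined with the short answers of the Natarajan--Zhang game, this yields the near-linear parameters in \cref{thm:mip* with leakage}. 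The price is the restriction to free games.
\end{itemize}
If you ran your guessing reduction on the anchored Natarajan--Zhang game instead, the gap to the paper's parameters would shrink to polylogarithmic factors.
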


In the protocol we show above, the question size is a sufficiently large linear multiple of the leakage size, and the answer size is larger than the leakage size and asymptotically the same polynomial up to some $\polylog(n)$ factor.

Our results imply that any $\poly(n)$ amount of leakage between provers in a multi-prover interactive proof can be allowed without changing their computational power.

\paragraph{Proof strategies:} 
Given the correspondence between $\MIP$ and $\MIP^*$ protocols and non-local games, we can consider an $N$-times parallel repetition of a two-prover one-round $\MIP$ (resp. $\MIP^*$) protocol for $\NEXP$ (resp. $\RE$). Roughly speaking, the parallel repetition of a non-local game $G$ is a game where the parties try to win $N$ copies of $G$ simultaneously. The parallel repetition game is denoted by $G^{\otimes N}$. Suppose that the amount of communication allowed between the provers is sublinear in $N$. Then, intuitively, it is hard for the cheating provers to win all $N$ games. This can be formalized by a direct product theorem in communication complexity, which has a long history of works (e.g., \cite{razborov1992distributional,chakrabarti2001informational,jain2005prior,braverman2013direct}).

For classical communication complexity, it is known that a direct product theorem follows from the parallel repetition theorem for the classical value, first shown by Raz \cite{raz1998parallel}.
It is also known that there exists a $\MIP$ protocol for $\NP$ and $\NEXP$ whose answer size is constant from the PCP theorem \cite{arora1998probabilistic,arora1998proof,dinur2007pcp,moshkovitz2008two,bafna2025quasi}. Therefore, by considering a sufficiently large $\poly(n)$ times parallel repetition of $\MIP$ protocols for $\NEXP$ with perfect completeness, we have $\MIP[\poly(n),\poly(n),\poly(n)] \supseteq \NEXP$, which proves \cref{thm:main classical informal theorem}.

For the entanglement-assisted interactive quantum communication complexity, Jain and Kundu recently \cite{jain2022direct} showed a direct product theorem for the parallel repetition of non-local games whose question distribution is product (see \cref{lem:quantum dpt} for a formal description).

In a non-local game, if the distribution of questions is a product distribution, we denote such games by free games. Natarajan and Zhang \cite{natarajan2023quantum} showed that there exists a succinct quantum free game protocol for the Halting Problem, which meets the conditions we want. Considering $\poly(n)$-times parallel repetition of the protocol in \cite{natarajan2023quantum}, and combining it with the direct product theorem in \cite{jain2022direct}, we have $\MIP^*[\poly(n),\poly(n),\poly(n)] \supseteq \RE$, giving us \cref{thm:main quantum informal theorem}.

We will formalize these proofs in \cref{sec:MIP,sec:MIP*}.
 
\subsubsection{Techniques based on low-soundness PCPs}

As our next contribution, we develop a new technique where we show how to convert any PCP construction with small soundness error into a $\MIP$ protocol resilient against leakage. See \cref{thm:CSP-to-MIP} for a formal statement. For this technique, we need the additional assumption that the leakage between the provers is one-way, meaning that only the first prover can send messages to the second prover. Another way of viewing this scenario is to assume that the verifier sends a question to the first prover, receives the answer, and all the communication between the provers happens at this stage, before the second prover receives its question from the verifier. Therefore, any communication from the second malicious prover to the first is not helpful in winning the game.

We then use this tool to show a succinct $\MIP$ protocol for $\NP$ that is sound against constant bits of leakage. Here succinct means that the question and answer size are small, and thus while $\NP \subset \NEXP$, this result is not superseded by \cref{thm:main classical informal theorem}. An informal statement of our result is the following.

\begin{theorem}[Informal version of  \cref{thm:lbitPCP}]\label{thm:lbitPCP-informal}
    There exists a two-prover one-round $\MIP$ protocol for $\NP$ where the question size is $2\log n + O(\log \log n)$, the answer size is $O(1)$, and the system is sound against $O(1)$ bits of one-way leakage.
\end{theorem}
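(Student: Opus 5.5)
The plan is to combine a low-soundness, quasi-linear-size two-query PCP for $\NP$ with a direct leakage-to-soundness conversion, namely the formal statement \cref{thm:CSP-to-MIP}. The starting point is a label-cover (projection CSP) PCP for $\NP$ with the following parameters: instance size $n\cdot\polylog(n)$, constant alphabet, and small constant soundness $\delta$. Such PCPs come from gap amplification, as in Dinur and Bafna et al., followed by a constant-round parallel repetition or composition to reach two-query form with constant alphabet.

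The verifier is the standard one. It samples a random constraint $(u,v)$, sends $u$ to prover~1 and $v$ to prover~2, and accepts iff the two labels satisfy the constraint. Completeness is perfect, since honest provers simply answer according to a satisfying labeling.

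For the question length, observe that each question is a vertex name. With $n\polylog(n)$ vertices per side, naming a vertex takes $\log n+O(\log\log n)$ bits. The pair of questions therefore totals $2\log n+O(\log\log n)$ bits. Answers are labels from a constant alphabet, so they have size $O(1)$.

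The core step is soundness against $\ell=O(1)$ bits of one-way leakage. Prover~1's strategy is a function $f(u)$. With leakage, prover~2's answer is a function $g(v,m)$, where $m=m(u)\in\{0,1\}^\ell$ is the leaked message. For each fixed value $m$, the pair $(f,g(\cdot,m))$ is an ordinary labeling. The honest verifier's acceptance probability is $\sum_m \Pr[m(u)=m \wedge \text{accept}]$, and each term is at most the acceptance probability of the no-leakage strategy that uses $g(\cdot,m)$. Hence
$$\Pr[\text{accept}]\le\sum_{m}\val(f,g_m)\le 2^\ell\,\delta .$$

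This naive union bound is loose, because each term is bounded by the full value rather than by the conditional mass. Still, it suffices: choose the PCP soundness $\delta<2^{-\ell}/c$ for a constant $c$. This is possible for constant $\ell$ with a constant alphabet, since amplification to any fixed constant soundness keeps the alphabet constant. The result is soundness $2^\ell\delta<1$, and the protocol stays constant-gap.

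The main obstacle is the bookkeeping of sizes: the PCP must simultaneously have quasi-linear size and constant alphabet at the required constant soundness. I would first try Dinur-style amplification applied to a quasi-linear PCP (Ben-Sasson and Sudan plus Dinur), followed by alphabet reduction. This gives size $n\polylog n$, which fits within the $O(\log\log n)$ slack. Alternatively, \cite{bafna2025quasi} yields quasi-linear size with arbitrarily small constant soundness directly.
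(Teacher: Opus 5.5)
Your argument is correct, but it is not \cref{thm:CSP-to-MIP} as you announce. Sending one vertex to each prover and union-bounding over the $2^\ell$ leaked messages, $\Pr[\text{accept}]\le\sum_m\val_\Psi(f,g_m)\le 2^\ell\delta$, is exactly the paper's \cref{cor:immediate}.

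The paper proves \cref{thm:lbitPCP} differently. It takes the \cite{bafna2025quasi} label cover with $\val(\Psi)\le 2^{-(\ell+1)}$ and runs the consistency test of \cref{thm:CSP-to-MIP} with $k=2$. There $P_1$ receives both endpoints of a random edge (hence the $2\log n$), and $P_2$ is asked about one random endpoint. The constraints satisfied by at least one of the $2^\ell$ labelings $P_2^m$ make up at most half of all constraints. So with probability $1/2$ the sampled edge forces $P_1$ either to violate it or to disagree with $P_2$ on some endpoint, which is caught with probability $1/2$, giving soundness $3/4$.

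Your route buys better parameters: per-prover question length $\log n+O(\log\log n)$, and soundness $2^\ell\delta$ that improves with $\delta$ rather than a fixed $3/4$. The same union bound over transcripts also handles two-way leakage. The consistency test, by contrast, relies on $P_1$'s answer being independent of which endpoint is checked, hence the paper's one-way restriction. The paper's route buys generality: it works for arity-$k$ CSPs. That is what \cref{thm:logn-leakage} needs, since the $(\log\log n)^{O(1)}$-query PCP of \cref{lem:low-error-PCP} admits no one-query-per-prover protocol.

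One caveat concerns the PCP ingredient. The Dinur/Ben-Sasson--Sudan route you list first does not work. For $\ell\ge 1$ you need soundness below $1/2$, which gap amplification cannot reach \cite{bogdanov2005gap}. A $t$-fold parallel repetition would raise the instance size to the $t$-th power, pushing questions to about $t\log n$ bits. \cref{lem:PCP2025} is the ingredient that actually makes your argument go through, exactly as in the paper.
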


Note that in this setting, if we allow the provers to communicate freely (in other words, having a single prover while the questions are $O(\log n)$ in length), the power of the protocol reduces to $\PLang$, essentially rendering the verification protocol useless. What we show is that we can still maintain the power of verifying $\NP$ while the provers are allowed to communicate a constant number of bits, and the overhead to pay is only an additive $O(\log \log n)$ term in the length of the question. 

We further show that there exists a $\MIP$ protocol for $\NP$ that is sound against $O(\log n)$ bits of leakage, but our result comes at the cost of super-constant soundness error.

\begin{theorem}[Informal version of \cref{thm:logn-leakage}]\label{thm:logn-leakage-informal}
    For any positive constant $\ell$, there exists a two-prover one-round $\MIP$ protocol allowing $\ell\log n$ bits of one-way leakage for $\NP$ where the question size is $O(\log n)$, the answer size is $\log n\cdot \poly(\log\log n)$, and the system has soundness $1-1/\poly(\log\log n)$.
\end{theorem}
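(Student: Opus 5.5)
Since the formal version is \cref{thm:logn-leakage}, the plan is to instantiate the generic conversion \cref{thm:CSP-to-MIP}, which turns a low-soundness PCP (a label-cover / two-query projection CSP) into a two-prover one-round $\MIP$ protocol robust against one-way leakage, with a PCP whose soundness error is small enough to absorb $\ell\log n$ bits. The guiding intuition for one-way leakage is this. The first prover's $L$-bit message to the second can be simulated by the second prover guessing it. That guess is correct with probability $2^{-L}$, so a leaky strategy with value $\omega$ yields a non-leaky strategy with value at least about $\omega\cdot 2^{-L}$. Conversely, non-leaky soundness $\delta$ forces leaky soundness at most roughly $2^{L}\delta$. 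A more refined version partitions the first prover's behaviour according to the $2^L$ possible messages and gives the second prover a list of $2^L$ candidate answers, so what is really needed is list-decoding soundness of the PCP.

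The key steps are as follows.

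\begin{enumerate}
\item Take a two-query projection PCP for $\NP$ (e.g.\ from \cite{moshkovitz2008two} or \cite{dinur2015lowerrorPCP}) with quasi-linear size $n\cdot\poly\log n$, so that questions have length $O(\log n)$. Choose its soundness error $\eps = 1/\poly(\log n)$ small compared to the list size $2^{L}=n^{\ell}$; this is where the parameters bite. Moshkovitz--Raz gives soundness $\eps$ with alphabet size $2^{\poly(1/\eps)}$, i.e.\ answer length $\poly(1/\eps)$.
\item Set $L=\ell\log n$ so that $2^L=n^\ell$ is only polynomial. A list-decoding bound of the form \emph{value with list size $k$ is at most $k\cdot\eps$} then cannot be used directly with $\eps=1/\polylog n$.
\item Instead, avoid a multiplicative $2^L$ loss and only require that leakage can shift the acceptance probability by a small amount, accepting a soundness of $1-1/\poly(\log\log n)$. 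The plan is to amplify via a layer of $k$-wise parallel (direct-product) repetition with $k=\Theta(\log n\cdot\poly\log\log n)/\log n$ structure. Concretely, encode each answer with $\log n\cdot\poly(\log\log n)$ bits, so that $\ell\log n$ leaked bits are a vanishing fraction, namely $1/\poly(\log\log n)$, of the answer information.
\item Then argue, by a chain-rule/information argument on the second prover's view, that conditioned on the leak a typical coordinate receives only $O(\ell/\poly\log\log n)$ bits. From this, conclude that the non-leaky soundness gap of a constant-gap PCP with alphabet $\poly(\log\log n)$ bits per symbol survives with gap $1/\poly(\log\log n)$.
\end{enumerate}

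The main obstacle is step (iii)--(iv): making precise the claim that $\ell\log n$ bits of one-way leakage only degrade the value by an amount governed by their fraction of the answer entropy, while keeping questions at $O(\log n)$. I would first try to prove it directly via \cref{thm:CSP-to-MIP} with the guessing argument applied per block, and fall back on a Raz-style conditioning lemma if that fails.
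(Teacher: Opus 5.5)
\textbf{There is a genuine gap: your steps (iii)--(iv) are false as stated, and the fix is a different PCP rather than a finer argument.}

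\textbf{What to keep and what to change.} Your guiding intuition is the mechanism the paper uses: with $L$ bits of one-way leakage, the second prover is one of $2^{L}$ fixed strategies $P_2^m$, so non-leaky soundness $\delta$ degrades to about $2^{L}\delta$. You abandon it for the wrong reason. The failure you identify in step 2 is cured by requiring soundness error at most $2^{-(L+1)}=1/(2n^{\ell})$, i.e.\ inverse polynomial. Two-query PCPs such as Moshkovitz--Raz do not give this; getting it with $O(1)$ queries is the Sliding Scale Conjecture. But \cref{lem:low-error-PCP} does give it. That is Dinur--Harsha--Kindler, which is \emph{not} two-query: it has $O(\log n)$ randomness, $k=(\log\log n)^{O(1)}$ queries and alphabet size $n^{1/(\log\log n)^{O(1)}}$, and a constant number of sequential repetitions pushes the error below $1/(2n^{\ell})$.

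\textbf{How the paper's argument then goes.} \cref{thm:CSP-to-MIP} is not restricted to label cover; it works for any arity $k$. By a union bound, half of the constraints are violated by all $n^{\ell}$ assignments $P_2^m$ simultaneously. On such a constraint, any satisfying tuple returned by $P_1$ disagrees with $P_2^m$ on some variable. The verifier asks $P_2$ about that variable with probability at least $1/k$. Hence soundness is $1-\frac{1}{2k}=1-1/\poly(\log\log n)$. So the $\poly(\log\log n)$ comes from the arity of the PCP, not from a ratio between leakage and answer length. The question size is $O(\log n)$ because the randomness is, and the answer size is $k\log|\Sigma|$.

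\textbf{Why steps (iii)--(iv) fail.} Answer length is not the resource that governs leakage: padding the answers makes the leak an arbitrarily small fraction of them without changing the game. The arithmetic in (iv) is also off. Spreading $\ell\log n$ bits over $\poly(\log\log n)$ coordinates gives $\ell\log n/\poly(\log\log n)=\omega(1)$ bits per coordinate, not $O(\ell/\poly(\log\log n))$.

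\textbf{A concrete attack on your instantiation.} In (iv) you take a constant-gap PCP with $\poly(\log\log n)$-bit symbols and repeat it $\poly(\log\log n)$ times. Then $P_1$ can pick locally satisfying labels in every copy and leak all of them. It also leaks, per copy, $O(\log\log n)$ bits identifying which endpoint is which, e.g.\ a position where the two endpoint names differ. This costs only $\poly(\log\log n)\ll\ell\log n$ bits. $P_2$ then answers consistently, however answers are encoded, and the provers win with probability $1$.

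\textbf{Why parallel repetition cannot meet the question-size bound.} The direct-product bound with communication (\cref{lem:classical parallel repetition theorem with leakage}) needs communication at most $cN$ for $N$ copies. Absorbing $\ell\log n$ bits therefore needs $N=\Omega(\log n)$ copies, hence $\Theta(\log^2 n)$-bit questions; this is exactly \cref{thm:mip-leakage-NP}. Even your $\poly(\log\log n)$ copies already make the questions $\log n\cdot\poly(\log\log n)$ bits long, which violates the $O(\log n)$ bound.
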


\paragraph{Proof strategies:} We start by showing a correspondence between low-soundness PCPs and the construction of $\MIP$s in which the first prover is allowed to communicate a bounded number of bits to the second prover. Roughly speaking, by utilizing a PCP construction in which the number of satisfied constraints in a no instance is very low, we can make sure that even if the provers decide to change their strategy mid-protocol, there are still many constraints that remain unsatisfied even with a new cheating strategy. For simplicity, suppose that the provers are only able to communicate 1 bit. This corresponds to the first prover specifying which of the two strategies the second prover has to respond with (one strategy corresponds to bit 0, and the other one to bit 1). Therefore, if we can amplify the soundness gap to a large enough constant (say, something less than $1/10$), then for any two different assignments they agree on before the protocol starts, there are still many constraints that remain unsatisfied by both. Hence, if the verifier is lucky enough to sample one of those, it can catch the cheating provers. We will formalize this intuition in \cref{sec:PCP2025}.

It is important to note that it is not clear to us whether Dinur's gap amplification theorem \cite{dinur2007pcp} can be used to obtain such a result from our techniques. The main reason is that for our proof to go through, we need an \emph{arbitrarily} small (but still constant) soundness error, which is something that Dinur's protocol cannot achieve, as observed by \cite{bogdanov2005gap}. The original PCP construction of \cite{arora1998proof, arora1998probabilistic} is also not useful due to the same reason. However, by using the constructions in \cite{moshkovitz2008two,bafna2025quasi}, we can achieve a very succinct $\MIP$ protocol for $\NP$ that is sound against leakage. This is because these constructions have the property of achieving an arbitrarily small soundness error by paying a moderate cost in the alphabet size and the proof length. To prove \cref{thm:lbitPCP-informal}, we make use of the recent breakthrough of \cite{bafna2025quasi}. We can have a similar result with a question length of $2\log n + O(\sqrt{\log n})$ from \cite{moshkovitz2008two}, which is larger than our result of $2\log n + O(\log \log n)$. To prove \cref{thm:logn-leakage-informal}, we will make use of a similar argument, but we need to utilize a PCP construction that achieves $1/\poly(n)$ soundness error. To this end, we will use the construction of Dinur et al.\! \cite{dinur2015lowerrorPCP} which has $\poly(\log\log n)$ query complexity, but the soundness error vanishes inverse polynomially in $n$. This will allow us to afford $O(\log n)$ bits of leakage, without incurring too much overhead in the question size. However, the super-constant query complexity causes the soundness error to not be a constant anymore.

This result also has interesting connections to the Sliding Scale Conjecture in the PCP literature. We will state it below and refer the interested reader to the survey of Moshkovitz \cite{moshkovitz2019slidingscale} for more information.

\begin{conjecture}[Sliding Scale Conjecture \cite{bellare1993efficientpcp}]\label{conj:sliding-scale}
    For any $1/\poly(n) \leq s <1$ and for all languages in $\NP$, there exists a PCP verifier $V$ with perfect completeness and soundness $s$ such that $V$ tosses $\log n$ random coins and makes a constant number of queries to a proof over an alphabet of size $1/\poly(s)$.
\end{conjecture}

Assuming that this conjecture is true, we can use it in our technique to improve the soundness error in \cref{thm:logn-leakage-informal} to a constant, and keep the answer size $O(\log n)$. Hence, it is possible to show that ruling out the existence of such succinct $\MIP$ protocols with $O(\log n)$ bits of leakage for $\NP$ can give a resolution for \cref{conj:sliding-scale}. It is an interesting open problem to figure out whether we can achieve constant soundness using other techniques. We will discuss more about open problems in \cref{sec:open-problems}.

\paragraph{Comparison of the two approaches:} We would like to emphasize that the two approaches that we present in this work are not entirely unrelated. Indeed, the key idea behind both is to reduce the soundness error in a multi-prover proof system to a carefully chosen small constant and utilize this to show that even a bounded amount of leakage does not help the cheating provers to succeed. Note that to obtain a succinct $\MIP$ protocol for $\NP$, we can allow leakage directly from a part of our first technique by utilizing a 2-query low-soundness PCPs \cite{moshkovitz2008two,bafna2025quasi}\footnote{We cannot obtain similar results using PCPs whose soundness error is at least 1/2 \cite{arora1998proof,arora1998probabilistic,dinur2007pcp} for the same reason we discussed earlier.} (see \cref{cor:immediate} for more details). In this case, the parameters are even better than what we have in \cref{thm:lbitPCP-informal} from our second technique. However, the strength of our second technique is that we can deal with general low-soundness PCPs whose number of queries is more than 2, and in particular, this allows us to achieve interesting parameters such as \cref{thm:logn-leakage-informal}.

Despite the similarity, there are differences in the power of these techniques. First, it is important to note that the second technique is only applicable when we make the additional assumption that the leakage is one-way. Also, it is not clear how to apply the second technique to achieve a result similar to the one presented in \cref{thm:main quantum informal theorem}. It could be the case that one achieves a similar result assuming the quantum PCP conjecture is true, but we will not discuss this possibility here and leave it to future work. Note that although the first technique is more powerful in capturing different scenarios, it inherently suffers from a polynomial blow-up in the game size. Our second technique suggests that this blow-up is not optimal.

\subsection{Related works}

Some decays of values by parallel repetition (the parallel repetition theorem) for classical values (e.g., \cite{raz1998parallel,hol2009parallel,rao2011parallel}) and quantum values (e.g., \cite{cleve2008perfect,kempe2011parallel,jain2014parallel,yuen2016parallel,bavarian2022anchored}) are known. 

We usually prohibit communication between the provers during the protocol. One exception, to our knowledge, is a model considered by Ben-Or, Hassidim and Pilpel \cite{ben-or2014quantum}. They considered a variant of quantum $\MIP$, where the provers do not share entanglement, the communication between the verifier and the provers is quantum, but the provers are allowed to have unlimited classical communication between them. They showed that any language in $\NEXP$ can be recognized in the model, which is the same as $\MIP$.

Communication complexity of parallel repetition of non-local games has been considered. Jain and Kundu \cite{jain2022direct} showed a direct product theorem for the entanglement-assisted communication complexity, and its application for device-independent quantum key distribution \cite{jain2020parallel}. Hasegawa, Le Gall and  Modanese \cite{hasegawa2025maximum} considered the parallel repetition of Magic Square games \cite{mermin1990simple,peres1990incompatible,aravind2002simple}, and showed that even if $o(n)$ EPR pairs are allowed to share prior to a communication protocol, $\Omega(n)$ qubits of communication is required to solve all Magic Square games with high probability.

%% file: preliminary.tex
\section{Preliminaries}\label{sec:prel}

We assume that the readers are familiar with basic concepts of computational complexity theory, quantum computing and information theory. We refer to \cite{arora2009computational} for a reference in computational complexity, \cite{buhrman2010nonlocality} in non-locality and communication complexity, and \cite{nielsen2010quantum,watrous2018theory,deWolf2019quantum} in quantum computing and information theory.

\subsection{Non-local games and parallel repetition theorem}
We say for a non-local game $G = (\pi, \cX \times \cY, \cA \times \cB, \mathsf{V})$, $\max\{\log |\cX|,\log|\cY|\}$ is the question size and $\max\{\log|\cA|,\log|\cB|\}$ is the answer size. The distribution $\pi$ is product if there are distributions $\pi_\cX$ , $\pi_\cY$ on $\cX$, $\cY$ respectively such that $\forall (x, y) \in \cX \times \cY : \pi(x, y) =\pi_\cX (x) \cdot \pi_\cY (y)$.

The classical value of a game $G = (\pi, \cX \times \cY, \cA \times \cB, \mathsf{V})$ is the maximum probability with which Alice and Bob can win $G$,
ranging over all purely classical strategies. The classical value of a game $G$ will be denoted $\omega_c(G)$.
A deterministic strategy is a restricted type of classical strategy in which $a$ and $b$ are simply functions of $x$ and $y$, respectively. The classical value of a game is always obtained by some deterministic strategy given that any probabilistic strategy can be expressed as a convex combination of deterministic strategies. In other words, it holds that
\[
    \omega_c(G) = \max_{a,b} \sum_{x,y} \pi(x,y) \mathsf{V}(a(x),b(y)|x,y) \enspace,
\]
where the maximum is over all functions $a : \cX \rightarrow \cA$ and $b : \cY \rightarrow \cB$.

A quantum strategy for a game $G = (\pi, \cX \times \cY, \cA \times \cB, \mathsf{V})$ consists of an initial bipartite state $\ket{\psi} \in \mathcal{H}_A \otimes \mathcal{H}_B$ for finite-dimensional Hilbert spaces $\mathcal{H}_A$ and $\mathcal{H}_B$, a quantum measurement $\{M^{x}_{a}\}_{a \in \cA}$ for each $x \in \cX$ over $\mathcal{H}_A$, and a quantum measurement $\{M^{y}_{b}\}_{b \in \cB}$ each $y$ over $\mathcal{H}_B$. On input $(x,y)$ drawn from $\pi$, Alice performs her measurement corresponding to $x$ on her portion of $\ket{\psi}$, yielding an outcome $a$. Similarly, Bob performs his measurement corresponding to $y$ on his portion of $\ket{\psi}$, yielding an outcome $b$. The results $a$ and $b$ are sent back to the referee, who evaluates if two answers and questions satisfy the predicate $\mathsf{V}$. In summary, the winning probability for the quantum strategy above is given by
\[
    \sum_{x,y} \pi(x,y) \sum_{a,b} \bra{\psi} M_a^x \otimes M_b^y \ket{\psi} \sV(a,b|x,y) \enspace.
\]

The quantum value of a game $G$, denoted by $\omega_q(G)$, is the supremum of the winning probabilities over all the quantum strategies of Alice and Bob.

We consider the parallel repetition of a non-local game $G$, which is a game where the parties try to win $N$ copies of $G$ simultaneously, and the parallel repetition game is denoted by $G^{\otimes N}$. In the game $G^{\otimes N}$, we consider questions $x = (x_1,\dots,x_N) \in \cX_N$, $y=(y_1,\dots,y_N) \in \cY_N$, where each pair $(x_i, y_i) \in \cX \times \cY$ is chosen independently according to the original distribution $\pi$. For outputs $a = (a_1,\dots,a_N) \in \cA_N$ and $b = (b_1,\dots,b_N) \in \cB_N$, Alice and Bob win the parallel repetition game if they win simultaneously on all $n$ coordinates, that is, if for every $i \in [N]$, we have $\sV(x_i,y_i|a_i,b_i) = 1$.

For classical values of any two-player non-local game, an exponential decay of values with parallel repetition is known.
\begin{lemma}[Parallel repetition theorem \cite{raz1998parallel,hol2009parallel}]\label{lem:parallel_repetition}
    For any two-prover game $G = (\pi, \cX \times \cY, \cA \times \cB, \mathsf{V})$ such that $\omega_c(G) = 1-\eps$ for $0<\eps\leq \frac{1}{2}$, 
    \[
        \omega_c(G^{\otimes N}) \leq (1-\eps^c)^{\Omega(\frac{N}{s})} \enspace,
    \]
    where $s=\log |\cA \times \cB| +1$ and $c$ is a universal constant.
\end{lemma}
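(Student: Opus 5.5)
The proof I would give follows Holenstein's information-theoretic argument, which yields the bound with $c=3$. Fix a deterministic strategy for $G^{\otimes N}$; this is enough, since $\omega_c$ is attained by deterministic strategies. Let $W_i$ be the event of winning coordinate $i$. The core step is the following. Let $S\subseteq[N]$ have size $k$, and let $E=\bigcap_{j\in S}W_j$ with $\Pr[E]\geq 2^{-\delta N}$. Then there is a coordinate $i\notin S$ such that
\[
\Pr[W_i\mid E]\leq 1-\eps/2,
\]
provided $k$ and $\delta N$ are at most roughly $\eps^2 N/s$.

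Given this step, I would pick coordinates greedily. At each stage, either the probability of winning all coordinates chosen so far already drops below $2^{-\delta N}$, or we add one more coordinate and that probability shrinks by a factor $(1-\eps/2)$. Running this for $t=\Theta(\eps^2 N/s)$ stages gives
\[
\omega_c(G^{\otimes N})\leq \max\bigl\{(1-\eps/2)^{t},\,2^{-\delta N}\bigr\} = (1-\eps^{c})^{\Omega(N/s)}.
\]

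To prove the core step, I would argue by contradiction: assume $\Pr[W_i\mid E]>1-\eps/2$ for every $i\notin S$, and build a strategy for a single copy of $G$ that wins with probability more than $1-\eps$. The construction has three parts.

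\emph{(1) Questions stay nearly undistorted.} Conditioning on $E$ barely changes the question distribution in most coordinates. Write $E$ as the event $E'$ on the questions of the other coordinates, together with the answer transcript on $S$. That transcript has at most $ks$ bits, so
\[
D\bigl(P_{XY\mid E}\,\big\|\,P_{XY}\bigr)\leq \log(1/\Pr[E]) + ks .
\]
By the chain rule, the average over $i\notin S$ of $D(P_{X_iY_i\mid E}\,\|\,\pi)$ is then $O((\delta N+ks)/N)$. Pinsker's inequality turns this into statistical distance $O(\eps)$ once the parameters are chosen as above.

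\emph{(2) Each prover can sample the rest alone.} I would introduce the standard "dependency-breaking" variable $R$: for each coordinate $j\neq i$, independently reveal either $x_j$ or $y_j$. Conditioned on $E$ and $R$, Alice's view $(x_i,R)$ determines her conditional distribution of the full input $x$ to within small error, and symmetrically for Bob. This again follows from small mutual information, by the same divergence bound.

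\emph{(3) Correlated sampling.} Alice and Bob use shared randomness to sample consistent completions $(x,y)$ given $x_i$ and $y_i$ respectively. Holenstein's correlated sampling lemma says they agree except with probability $O(\text{statistical distance})$. They then play the repeated strategy and output its answers in coordinate $i$.

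Putting the parts together, the single-game strategy wins $G$ with probability at least $\Pr[W_i\mid E]-O(\eps)$. With constants tuned, this exceeds $1-\eps$, contradicting $\omega_c(G)=1-\eps$. Balancing $\delta$, $k$ and the Pinsker losses is what gives $\eps^{3}$ and the factor $1/s$ in the exponent.

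The main obstacle is part (2), the embedding. The naive conditioning on $E$ destroys the product structure between the two provers, so neither can locally sample the other coordinates. What I would try first is Holenstein's route: show that the divergence between $P_{X\mid E,R,x_i}$ and $P_{X\mid E,R,x_i,y_i}$ is small on average, and then apply correlated sampling. Raz's original argument, with its more delicate $\Pr[E]$-dependent bounds, is the fallback.
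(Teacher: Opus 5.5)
The paper does not prove this lemma; it is quoted from Raz and Holenstein, and your outline follows Holenstein's architecture. The outer layer is right: greedy selection of coordinates, divergence plus Pinsker for the conditioned question marginals, and $t=\Theta(\eps^2N/s)$ rounds at a factor $1-\eps/2$ each, which is where $\eps^3$ and $N/s$ come from.

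The gap is in the embedding step. You rightly call it the heart, but you aim at the wrong statement.

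Once the conditioning event is made a rectangle, $P_{X\mid E,R,x_i}$ and $P_{X\mid E,R,x_i,y_i}$ are \emph{identical}. (To make it a rectangle, fix or jointly sample the transcript $(x_S,y_S,a_S,b_S)$, so the event becomes $\{x:A_S(x)=a_S\}\times\{y:B_S(y)=b_S\}$.) Given $R$, the measure factors as $\pi(x_i,y_i)f(x_{-i},R)g(y_{-i},R)$, and intersecting with a rectangle preserves this. So private completion is exact, and the divergence you propose to bound is zero.

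What correlated sampling actually needs concerns the shared variable $U$. This is $R_{-i}$ together with the transcript, unless the transcript is fixed in advance (Alice cannot compute $b_S$). You need $P_{U\mid E,x_i}\approx P_{U\mid E,x_i,y_i}\approx P_{U\mid E,y_i}$ for an average $i$. Equivalently, $I(U;Y_i\mid X_i)$ and $I(U;X_i\mid Y_i)$ under $P_{\cdot\mid E}$ must be $O((\log(1/\Pr[E])+ks)/N)$ on average.

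This is the key lemma of Holenstein's proof, and it does not follow ``by the same divergence bound''. The argument runs as follows. Let $R$ range over coordinate $i$ as well. The chain rule against the conditionally product measure $P_{XY\mid R}$ then gives $\mathbb{E}_i D(P_{X_iY_i\mid E,R}\|P_{X_iY_i\mid R_i})=O(\log(1/\Pr[E])/N)$. With probability $1/2$ one has $R_i=X_i$, which yields $P_{Y_i\mid E,R_{-i},x_i}\approx\pi(\cdot\mid x_i)\approx P_{Y_i\mid E,x_i}$. The $ks$ term enters only through the entropy of $(a_S,b_S)$ inside $U$. Note that $D(P_{XY\mid E}\|P_{XY})\le\log(1/\Pr[E])$ already holds without it.

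Relatedly, step (3) as written fails in general, because it correlated-samples full completions $(x,y)$. Suppose Bob's questions are uniform bits independent of Alice's, and $E$ pins the parity $y_1\oplus\cdots\oplus y_N$. Then $Y_{-i}$ determines $Y_i$, so $P_{Y_{-i}\mid E,x_i}$ and $P_{Y_{-i}\mid E,x_i,y_i}$ are at statistical distance $1/2$ for every $i$. This is exactly why one correlated-samples only $U$, which reveals one side per coordinate, and completes privately afterwards. With the crux restated as the bound on $I(U;Y_i\mid X_i)$ and its mirror, your outline becomes Holenstein's proof.
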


\subsection{Interactive proof systems}

An $\AM$ protocol is a single-prover interactive proof ($\IP$) protocol where the number of rounds is one and the first message sent by the verifier is sampled uniformly at random.

\begin{definition}[$\AM$]
    We say a language $L$ is in $\AM$ if and only if for an input $z \in \{0,1\}^*$, there exists a single-prover one-round interactive proof system satisfying the following properties:
    \begin{enumerate}
        \item Verifier's messages are sampled uniformly at random from $\{0,1\}^{p(|z|)}$ for some polynomial $p$.
        \item If $z \in L$, the verifier accepts with probability at least $\frac{2}{3}$. 
        \item If $z \notin L$, the verifier accepts with probability at most $\frac{1}{3}$.
    \end{enumerate}
\end{definition}

\begin{fact}[\cite{goldwasser1986private,babai1988arthur}]
    Any language recognized by single-prover constant-round interactive proof systems is in $\AM$.
\end{fact}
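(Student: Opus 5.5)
This is the classical fact that constant-round single-prover interactive proofs collapse to $\AM$ (Goldwasser--Sipser and Babai--Moran). My plan follows the standard two-stage route: first make the verifier's coins public, then collapse the number of rounds to one.

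\textbf{Step 1: private coins to public coins.} Given a $k$-round private-coin protocol, I apply the Goldwasser--Sipser set lower-bound technique. The prover will convince the verifier that the set of verifier random strings leading to acceptance is large.

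For a yes instance, this set has density at least $2/3$. For a no instance, its weighted size, taken over prover strategies, is at most $1/3$.

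The public-coin verifier emulates the original protocol round by round. At each round, the prover claims approximate sizes of the sets of coin strings consistent with each possible next message. The prover then chooses the message that carries most of the mass, bucketed into powers of two. The verifier checks these claims using random pairwise-independent hash functions: the prover must exhibit a preimage of a random target inside the claimed set. At the end, the verifier samples a consistent coin string and checks that the transcript would be accepted.

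Since each round loses only a constant factor (from the bucketing and from hashing), a constant number of rounds gives a constant-factor gap. That gap can be re-amplified by parallel repetition of the public-coin protocol.

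\textbf{Step 2: round collapse.} Next I use the Babai--Moran speedup, which turns an $\mathsf{AMA}$ protocol into an $\mathsf{AM}$ protocol, and more generally shortens the alternation by one round at the cost of a polynomial blowup.

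The construction is as follows. First reduce the error of the $\mathsf{AMA}$ protocol by parallel repetition so that it is exponentially small relative to the length of the Merlin message. Then move Arthur's second coins before Merlin's message, sending many independent copies, and take a majority vote.

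The key argument is a union bound over all possible Merlin messages. Because the error is exponentially small compared to the number of possible messages, fixing Arthur's coins early does not help a cheating prover.

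Applying this swap a constant number of times keeps the blowup polynomial, and yields a two-message protocol in which the verifier's message is uniform. Finally, standard amplification restores the $2/3$ versus $1/3$ thresholds.

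\textbf{Main obstacle.} I expect the hard step to be the soundness analysis of the set lower-bound protocol across multiple rounds. The difficulty is controlling how the prover's adaptive choice of claimed set sizes can inflate acceptance. Since the paper cites this as a known fact, I would simply invoke \cite{goldwasser1986private,babai1988arthur} rather than reprove it.
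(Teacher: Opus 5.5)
Your route, the Goldwasser--Sipser public-coin emulation followed by the Babai--Moran round collapse, is exactly what the cited works do, and the paper gives nothing beyond the citation. However, the gap bookkeeping in your Step~1 is wrong in a way that breaks the argument as written.

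\textbf{The per-round loss is not constant.} Bucketing the possible next messages $m$ by $\lfloor \log_2 |S_m| \rfloor$ gives up to $\ell+1$ classes, where $\ell$ is the length of the verifier's coin string. The heaviest class is only guaranteed a $1/(\ell+1)$ fraction of the mass. So the honest prover can certify only a $\Theta(1/\ell)$ fraction of the true count per round, not a constant fraction.

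\textbf{The order of amplification matters.} The emulation separates yes from no instances only if the completeness/soundness ratio of the private-coin protocol already dominates the accumulated loss $(\Theta(\ell))^{k}$. Even a constant per-round loss $\lambda$ would require a ratio beyond $\lambda^{k}$. Starting from $2/3$ versus $1/3$, the honest prover on a yes instance whose mass is spread over all buckets certifies only about $\frac{2}{3}2^{\ell}/(\Theta(\ell))^{k}$ accepting coin strings. A cheating prover on a no instance whose $\frac{1}{3}2^{\ell}$ accepting strings lie in a single bucket at every round loses only constant factors, so it certifies more than that. The public-coin protocol you obtain therefore need not have any gap, and ``re-amplifying by parallel repetition of the public-coin protocol'' has nothing to amplify.

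\textbf{The fix: amplify before emulating.} First apply $t$-fold parallel repetition to the private-coin protocol. This is sound for single-prover interactive proofs and keeps the number of rounds. It gives error $2^{-\Omega(t)}$ with coin length $t\ell$. Choose $t$ polynomial so that $2^{-\Omega(t)}$ is far below $(O(t\ell))^{-k}$, and only then run the emulation. This is also why the Goldwasser--Sipser step tolerates polynomial per-round losses; the constant-round restriction is needed only for the Babai--Moran step.

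\textbf{A smaller point in Step~2.} Full parallel repetition of an $\mathsf{AMA}$ protocol multiplies Merlin's message length $m$ by the same factor. So it cannot push the error below $2^{-m}$, contrary to your first sentence there. What works is what you describe next:
\begin{itemize}
\item use an averaging (Markov) argument over Arthur's first message to get a good gap for most first messages;
\item repeat only Arthur's final coins against a single Merlin message, with a majority vote;
\item then swap the order and take the union bound over Merlin's messages.
\end{itemize}
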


\begin{definition}[$\NEXP$]
    We say a language $L$ is in $\NEXP$ if and only if there exist polynomials $p,q$ and a deterministic classical algorithm $\mathcal{A}$ satisfying the following properties:
    \begin{enumerate}
        \item For all $x,y$ the algorithm $\mathcal{A}$ runs in time $2^{p(x)}$ on input $(x,y)$.
        \item If $x \in L$, there exists a string $y$ of length $2^{q(|x|)}$ such that the algorithm $\mathcal{A}$ given $x$ and $y$ accepts.
        \item If $x \notin L$, then for all strings $y$ of length $2^{q(|x|)}$, the algorithm $\mathcal{A}$ given $x$ and $y$ rejects.
    \end{enumerate}
\end{definition}

We recall the definition of the complexity class $\RE$, which stands for the set of recursively enumerable languages. We say a language $L \subseteq \{0,1\}^*$ is in $\RE$ if
and only if there exists a Turing machine $M$ such that if $x \in L$, then $M(x)$ halts and outputs $1$, and if $x \notin L$, then either $M(x)$ outputs $0$ or it does not halt. The Halting Problem is the language that contains descriptions of Turing machines that halt on the empty input. It is known that the Halting Problem is complete for $\RE$, meaning that every language $L \in \RE$ can be reduced in polynomial-time to the Halting Problem (see Lemma 12.8 in \cite{ji2020mip*}).

Let us define the class $\MIP$ using classical values of non-local games. In the definition (and throughout this paper), we consider a two-prover and one-round $\MIP$ protocol while more generally, we can consider $\poly(n)$-prover, $\poly(n)$-round $\MIP$ protocols.

\begin{definition}[$\MIP$]\label{def:mip}
    We say a language $L$ is in $\MIP_{c,s}[q,a]$ if and only if, for $z \in \{0,1\}^*$, there exists a family of non-local games $G_z = (\pi, \cX \times \cY, \cA \times \cB, \mathsf{V})$ satisfying the following properties: 
    \begin{enumerate}
        \item The question size is $q$ and the answer size is $a$.
        \item There is an algorithm running in $\poly(|z|)$-time, given $z$, to sample questions $(x,y) \in \cX \times \cY$ from $\pi$.
        \item There is an algorithm running in $\poly(|z|)$-time, given $z$ and $(x,y,a,b) \in \cX \times \cY \times \cA \times \cB$, to output $\sV(a,b|x,y) \in \{0,1\}$.
        \item If $z \in L$, $\omega_c(G_z) = c$
        \item If $z \notin L$, $\omega_c(G_z) \leq s$.
    \end{enumerate}
\end{definition}

When we do not specify the parameters of $\MIP$, the completeness and soundness parameters are $c=1$ and $s=\frac{1}{2}$, and $q$ and $a$ are some $\poly(n)$.

\begin{fact}[\cite{babai1991mip}]
    $\MIP = \NEXP$.
\end{fact}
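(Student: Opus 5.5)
The two inclusions need separate arguments. For the easy direction $\MIP\subseteq\NEXP$, I would use the fact recalled in the preliminaries that the classical value $\omega_c(G_z)$ is attained by a deterministic strategy, i.e.\ a pair of functions $a:\cX\to\cA$, $b:\cY\to\cB$. Since the question and answer sizes $q,a$ are $\poly(|z|)$, such a pair is described by a string of length $2^{\poly(|z|)}$. The $\NEXP$ machine guesses this string as its witness $y$. It then deterministically enumerates every random string of the sampler for $\pi$ (there are at most $2^{\poly(|z|)}$ of them, since the sampler runs in polynomial time), computes $(x,y)$ and the predicate $\sV(a(x),b(y)|x,y)$, and so computes the winning probability exactly in exponential time. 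It accepts iff this probability equals $c=1$. Completeness and soundness follow directly from items 4 and 5 of \cref{def:mip}.

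For $\NEXP\subseteq\MIP$ I would follow the Babai--Fortnow--Lund route in three stages.
\begin{enumerate}
\item \textbf{Arithmetization.} Reduce an arbitrary $\NEXP$ language to (Cook--Levin at exponential scale) the succinct $3$-SAT problem: a $\poly(n)$-size circuit $C$ describes a formula on $2^{m}$ variables with $m=\poly(n)$. Arithmetize $C$ over a field $\mathbb{F}$ of size $\poly(n)$. A satisfying assignment, encoded as a multilinear extension $\tilde A:\mathbb{F}^m\to\mathbb{F}$, satisfies the formula iff a certain low-degree polynomial $g$ in $\tilde A$ and the arithmetized $C$ vanishes on $\{0,1\}^{3m+3}$. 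Replace this by the zero test $\sum_{w\in\{0,1\}^{3m+3}} g(w)R(w)=0$ with a random multilinear $R$.
\item \textbf{Verification with an oracle.} Verify that sum with the sumcheck protocol against one prover. The final check requires a few values of $\tilde A$ at random points, and these are read from an oracle.
\item \textbf{Multilinearity test.} Ensure that the oracle is close to a multilinear function, so that sumcheck soundness applies to its decoding.
\end{enumerate}

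The oracle is implemented by the second prover, in the style of Fortnow--Rompel--Sipser. The verifier asks the second prover for $\tilde A$ at the needed points, mixed in with random points, and cross-checks the answers against the first prover's answers and against lines through those points. Non-communication fixes the second prover's answers as a function, which is exactly what makes it an oracle.

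Finally, I would bring the protocol to the two-prover one-round form with $c=1$ and $s=1/2$ used in \cref{def:mip}. One option is to compress the interaction: a single verifier message bundles all the randomness, and one prover answers the whole transcript while the other is queried at a random point for consistency. The cleaner option is to pass to the scaled-up PCP theorem for $\NEXP$ and use the clause--variable game. Either way the result has perfect completeness and soundness $1-\eps$ with $\eps=1/\poly(n)$ and constant-size answers. Then \cref{lem:parallel_repetition}, applied with $N=\poly(n)/\eps^{c}$ copies, drives the soundness below $1/2$. Completeness stays $1$ because honest provers win every copy, and questions and answers remain of polynomial length.

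I expect the main obstacle to be the soundness analysis of the multilinearity test. One must show that a function passing the test with noticeable probability agrees with a genuine multilinear polynomial on most points, and that sumcheck soundness transfers to this decoded polynomial. My first attempt would be the Babai--Fortnow--Lund line-by-line induction on $m$. If that gets too heavy, I would fall back on a low-degree test of the Rubinfeld--Sudan type together with the Schwartz--Zippel lemma.
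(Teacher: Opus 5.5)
The paper does not prove this fact. It imports it from \cite{babai1991mip} as a black box, so there is no argument in the paper to compare against.

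Your inclusion $\MIP\subseteq\NEXP$ is complete: guess deterministic strategy tables of length $2^{\poly(|z|)}$, enumerate the sampler's coins, compute the value exactly, and accept iff it equals $1$.

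For $\NEXP\subseteq\MIP$ you follow the Babai--Fortnow--Lund route. You are right that this route alone does not give the paper's \cref{def:mip}, which is two-prover \emph{one-round} with $c=1$, $s=1/2$. BFL's protocol is multi-round. The one-round form needs either a Feige--Lov\'asz-type oracularization or the scaled-up PCP theorem, which the paper uses separately for \cref{fact:MIP=NEXP succint}. Your ``cleaner option'' makes stages 1--3 superfluous. It derives the statement from a strictly stronger theorem, and that theorem's proof contains the same low-degree-testing difficulty you identify as the main obstacle.

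The overall plan is right, but as written three steps would not go through.

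(i) Gate-by-gate arithmetization of an arbitrary polynomial-size circuit $C$ can have degree exponential in its depth. The sumcheck messages would then be exponentially long, and the Schwartz--Zippel bound over a polynomial-size field would be vacuous. BFL instead reduce to oracle-3-satisfiability, whose checking predicate is a polynomial-size Boolean \emph{formula*}, so its arithmetization has degree at most its size. Alternatively, you can add the gate values of $C$ as auxiliary witness variables. Similarly, $R$ cannot be a uniformly random multilinear polynomial, since it has $2^{3m+3}$ coefficients. Use a succinct family such as $R_t(w)=\prod_i(1-w_i+t_iw_i)$ with $t$ uniform in $\mathbb{F}^{3m+3}$. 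Then $\sum_w g(w)R_t(w)$ is a nonzero multilinear polynomial in $t$ whenever $g\not\equiv 0$ on the cube.

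(ii) Non-communication turns the second prover into an oracle only if it receives a single query. If you send it ``the needed points mixed in with random points'', its answer at each point may depend on the whole tuple. The Fortnow--Rompel--Sipser step is this: the first prover answers all $Q$ oracle queries, and the second prover receives one of them chosen uniformly at random. This gives soundness $1-(1-s_0)/Q$, where $s_0$ is the soundness of the oracle protocol.

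(iii) In the ``compress the interaction'' option, suppose the first prover sees all sumcheck challenges $r_1,\dots,r_{3m+3}$ at once. It can then pass sumcheck for any claimed sum by choosing linear round polynomials backwards from the final value, even if every value of $\tilde A$ it reports is honest. So a consistency query at a random point of $\tilde A$ does not suffice. The second prover must receive a random round index $i$ together with only the prefix $r_1,\dots,r_{i-1}$, and return the round-$i$ polynomial; this oracularizes the public-coin protocol. With this fix the first prover's answer has polynomial rather than constant length. That is harmless: \cref{lem:parallel_repetition} with answer parameter $s=\poly(n)$ and $\eps=1/\poly(n)$ still needs only polynomially many copies.
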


In $\MIP^*$ protocols, the provers can share some entangled quantum states before the protocol starts. Below, we define $\MIP^*$ protocols using quantum values of non-local games.

\begin{definition}[$\MIP^*$, Definition 5.29 in \cite{ji2020mip*}]\label{def:mip*}
    We say a language $L$ is in $\MIP^*$ if and only if, for an input $z \in \{0,1\}^*$, there exists a family of non-local games $G_z = (\pi, \cX \times \cY, \cA \times \cB, \mathsf{V})$ satisfying the following properties: 
    \begin{enumerate}
        \item There is an algorithm that runs in $\poly(|z|)$-time, given $z$, to sample questions $(x,y) \in \cX \times \cY$ from $\pi$.
        \item There is an algorithm that runs in $\poly(|z|)$-time, given $z$ and $(x,y,a,b) \in \cX \times \cY \times \cA \times \cB$, to output $\sV(a,b|x,y) \in \{0,1\}$.
        \item If $z \in L$, $\omega_q(G_z) = 1$.
        \item If $z \notin L$, $\omega_q(G_z) \leq \frac{1}{2}$.
    \end{enumerate}
\end{definition}

$\MIP^*$ protocols are known to be extremely powerful, recognizing the Halting Problem, which is $\RE$-complete. 
\begin{theorem}[\cite{ji2020mip*}]
    $\MIP^*=\RE$.
\end{theorem}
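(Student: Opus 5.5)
The statement is $\MIP^*=\RE$ from \cite{ji2020mip*}. It is quoted here as a known result, so the honest plan is to sketch the architecture of that proof rather than rederive it.

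\emph{Easy inclusion.} I would first show $\MIP^*\subseteq\RE$. Given $z$, enumerate finite-dimensional quantum strategies for $G_z$: dimensions $d=1,2,\dots$, and $\epsilon$-nets of states and measurements of increasing fineness. Compute lower bounds on their winning probabilities, and halt and accept as soon as some strategy certifiably wins with probability $>1/2$. If $z\in L$, then $\omega_q(G_z)=1$, so some finite-dimensional strategy exceeds $1/2$ and the search halts. If $z\notin L$, then $\omega_q(G_z)\le 1/2$, so no strategy is ever certified and the search never accepts. The uniform poly-time sampler and decider from \cref{def:mip*} make each step computable.

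\emph{Hard inclusion $\RE\subseteq\MIP^*$.} Since the Halting Problem is $\RE$-complete, it suffices to give a poly-time computable map $M\mapsto G_M$ with the following properties.
\begin{itemize}
\item If $M$ halts, $\omega_q(G_M)=1$.
\item If $M$ does not halt, $\omega_q(G_M)\le 1/2$.
\end{itemize}
The engine is a \emph{gap-preserving compression theorem}. Consider a uniform family of verifiers $V_n$, each described succinctly by Turing machines running in time $\poly(n)$. Compression produces a new family $V'_n$ with the following properties:
\begin{itemize}
\item the sampler and decider of $V'_n$ run in time $\poly(\log n)$;
\item completeness is preserved: value $1$ for $V_n$ gives value $1$ for $V'_n$;
\item soundness is preserved: value $\le 1/2$ for $V_n$ gives value $\le 1/2$ for $V'_n$, and entanglement bounds transfer suitably.
\end{itemize}
To build it, I would proceed in four steps.
\begin{enumerate}
\item Introspection: the provers sample their own questions, using self-tested EPR pairs and the quantum low-degree test. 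This compresses question length.
\item Answer reduction: the provers give a PCP of proximity for the decision predicate, encoded with low-degree codes. This compresses the decider.
\item Oracularization: this makes the game synchronous and symmetric.
\item Parallel repetition for anchored games: this restores constant soundness after the steps above.
\end{enumerate}

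\emph{Diagonalization.} Next, I would define the family $V^M_n$ as follows. Simulate $M$ for $n$ steps and accept if it halted. Otherwise, run the compressed version of $V^M_{2n}$. Kleene's recursion theorem makes this self-reference legitimate. Now take $G_M=V^M_1$. If $M$ halts at step $T$, completeness propagates down the chain, and $\omega_q=1$. If $M$ never halts, soundness alone is not enough, because the value could creep up along the infinite chain. The extra ingredient is that compression also forces any strategy with value $>1/2$ for $V'_n$ to use entanglement growing with $n$. Unrolling the recursion then shows that any strategy with value above $1/2$ would need unbounded dimension, which is impossible. Hence $\omega_q\le 1/2$.

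\emph{Main obstacle.} The hardest step is the soundness analysis of introspection and answer reduction against entangled provers. This is where the quantum low-degree test (robust self-testing of Pauli measurements on many EPR pairs) is essential, and where its error bounds must be independent of $n$ so the recursion closes. That is the step I would expect to dominate the work. In this paper I would therefore cite \cite{ji2020mip*} for it.
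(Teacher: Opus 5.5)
Your proposal agrees with the paper's treatment: the paper gives no proof and imports the result from \cite{ji2020mip*}, as you ultimately do, and your outline (the easy inclusion by enumerating finite-dimensional strategies; the hard inclusion via gap-preserving compression plus a self-referential halting verifier, closed by an entanglement lower bound) is a faithful high-level picture of that work. If you keep the sketch, fix two points: oracularization is not a final symmetrization step but must \emph{precede} answer reduction, because it gives one prover both questions so that this prover alone can supply the PCP proof for the decider's predicate on the answer pair; and in the non-halting case soundness alone fails not because the value ``creeps up'' but because the soundness implication only propagates downward from a base case that does not exist (assigning value $1$ to every $V^M_n$ is consistent with it), which is exactly what the entanglement lower bound rules out.
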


\subsection{Probabilistically checkable proofs}

Probabilistically checkable proofs (PCPs) are one of the most important results in the study of complexity theory, with many applications ranging from hardness of approximation to construction of secure cryptographic protocols. In this paper, we view PCPs as constructions that allow us to \emph{amplify} the hardness of arbitrary 3-SAT instances. We will start by defining the label cover problem and proceed with recalling some well-known and recent results in the PCP literature.

\begin{definition}[Label cover]\label{def:label-cover}
    An instance of label cover $\Psi = (G=(L\cup R,E), \Sigma_L, \Sigma_R, \Phi = \{\Phi_{e}\}_{e\in E})$ consists of a bipartite graph $G$, alphabets $\Sigma_L, \Sigma_R$ and constraints $\Phi_e\subseteq \Sigma_L\times \Sigma_R$, one for each edge.
    Each one of the constraints is a projection constraint, meaning that for every $e\in E$ there is a map $\phi_e\colon \Sigma_L\to\Sigma_R$ such that
    \begin{equation*}
        \Phi_e = \{(\sigma,\phi_e(\sigma))~|~\sigma\in \Sigma_L\} \enspace.
    \end{equation*}
\end{definition} 

Given a label cover instance $\Psi$, the goal is to find assignments $A_L\colon L\to\Sigma_L$ and $A_R\colon R\to\Sigma_R$ that satisfy as many of the constraints as possible, namely that maximize the quantity $\val(\Psi) = \max_{A_L,A_R}\val_{\Psi}(A_L,A_R)$ where
\begin{align*}
    \val_{\Psi}(A_L,A_R) = \frac{1}{|E|}\left|\{e=(u,v)\in E~|~(A_L(u),A_R(v))\in \Phi_e\}\right|\enspace.
\end{align*}

We can view the PCP theorem as showing that the problem $\textsc{GapLabelCover}[c,s]$ is $\NP$-complete, where $c-s=\Omega(1)$ is a constant. However, there is still interest in optimizing the parameters of interest:
\begin{itemize}
    \item \emph{Completeness:} The parameter $c$, and ideally we want to be as high as possible. We will always work with the case where $c=1$, which is denoted by perfect completeness.
    \item \emph{Soundness:} The parameter $s$, which we ideally want to have as small as possible.
    \item \emph{Instance size:} Different constructions of PCPs start with an arbitrary instance of 3-SAT of size $n$, and show a polynomial-time reduction to an instance of label cover. The overhead in the size of the label cover instance as a function of $n$ is of major interest and is denoted by instance size.
\end{itemize}

We will proceed with stating some of the known PCP constructions. In this view, the original PCP theorem of \cite{arora1998probabilistic,arora1998proof} can be stated as follows.

\begin{lemma}[\cite{arora1998probabilistic,arora1998proof}]
    $\textsc{GapLabelCover}[1,1-\eps]$ for a tiny (but constant) $\eps$ is $\NP$-complete, and the instance size is $\poly(n)$.
\end{lemma}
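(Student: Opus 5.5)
The plan is to derive the statement from the PCP theorem in its gap-3SAT form via the standard clause--variable reduction. The input is the PCP theorem of \cite{arora1998probabilistic,arora1998proof}, phrased as: there is a constant $\delta>0$ and a polynomial-time reduction taking any 3-SAT instance $\varphi$ of size $n$ to a 3-CNF formula $\varphi'$ with $\poly(n)$ clauses such that:
\begin{itemize}
\item if $\varphi$ is satisfiable, then $\varphi'$ is satisfiable;
\item if $\varphi$ is unsatisfiable, then every assignment violates at least a $\delta$ fraction of the clauses of $\varphi'$.
\end{itemize}
This step carries all the real difficulty, namely the algebraic PCP machinery or Dinur's gap amplification. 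I will not reprove it; I take it as the black box the lemma rephrases. Membership of $\textsc{GapLabelCover}$ in $\NP$ is immediate, since an assignment is a polynomial-size witness. So only $\NP$-hardness remains to be shown.

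Given $\varphi'$ with clause set $C$ and variable set $X$, build the label cover instance $\Psi$ as follows:
\begin{itemize}
\item the left vertices are $L=C$ and the right vertices are $R=X$;
\item there is an edge $(c,x)$ whenever the variable $x$ appears in the clause $c$, so each clause has degree at most $3$ (pad repeated variables if needed so the degree is exactly $3$);
\item $\Sigma_R=\{0,1\}$, and $\Sigma_L$ is the set of partial assignments to a clause's three variables, of size at most $8$, with the constraint at $c$ requiring the left label to be a satisfying assignment of $c$;
\item the projection $\phi_{(c,x)}$ sends a satisfying local assignment to its value on $x$; non-satisfying local labels can be excluded by restricting $\Sigma_L$ per vertex, or equivalently mapped to a fixed dummy so that they never satisfy any edge.
\end{itemize}
The instance has size $O(|C|)=\poly(n)$ and is constructed in polynomial time.

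\emph{Completeness.} A satisfying assignment $\sigma$ of $\varphi'$ gives labels $A_R=\sigma$ and $A_L(c)=\sigma|_c$, and these satisfy every edge, so $\val(\Psi)=1$.

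\emph{Soundness.} Take any pair $(A_L,A_R)$ and view $A_R$ as an assignment to $X$. If $\varphi$ is unsatisfiable, at least a $\delta$ fraction of the clauses are falsified by $A_R$. For each such clause $c$, the label $A_L(c)$ satisfies $c$, so it must disagree with $A_R$ on at least one of the three variables of $c$. Hence at least one of the three edges at $c$ is violated. Since every clause has exactly $3$ edges, at least a $\delta/3$ fraction of all edges are violated, and so $\val(\Psi)\le 1-\delta/3$. Setting $\eps=\delta/3$ completes the argument.

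The only genuine obstacle is the gap-3SAT hardness itself. The reduction to projection label cover is routine; the points to check are that the degrees are regular, so that the per-clause loss translates into a constant fraction of edges, and that the constraints are indeed projections from $\Sigma_L$ to $\Sigma_R$.
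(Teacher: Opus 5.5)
The paper gives no proof of this lemma---it is simply quoted from \cite{arora1998probabilistic,arora1998proof} as a form of the PCP theorem. Your derivation (gap-3SAT as the black box, followed by the clause--variable reduction with $\eps=\delta/3$) is the standard and correct way to obtain this label-cover formulation.

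One correction: drop the ``fixed dummy'' alternative. In the paper's definition each $\Phi_e$ is the graph of a total map $\phi_e\colon\Sigma_L\to\Sigma_R$, so any dummy symbol lies in $\Sigma_R$. Then $A_R$ can assign it to every variable, and non-satisfying left labels satisfy every edge, which destroys soundness. Instead use a single alphabet $\Sigma_L=[7]$ and, for each clause $c$, decode labels through a fixed surjection onto the satisfying assignments of $c$'s variables. Also, no padding is needed: left-degree at most $3$ already gives $|E|\le 3|C|$, and hence a $\delta/3$ fraction of violated edges.
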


Also, we state Dinur's PCP theorem \cite{dinur2007pcp}, which utilizes a combinatorial proof using gap amplification as follows.

\begin{lemma}[\cite{dinur2007pcp}]
    $\textsc{GapLabelCover}[1,1-\Omega(1)]$ is $\NP$-complete, and the instance size is $\poly(n)$.
\end{lemma}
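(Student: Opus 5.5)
The plan is to follow Dinur's combinatorial gap-amplification route. I would work with $2$-CSPs on constraint graphs over a constant alphabet $\Sigma_0$, and only at the end convert to a projection label cover instance as in \cref{def:label-cover}. Membership in $\NP$ is immediate, since an assignment is a polynomial-size witness. Completeness starts from a $3$-SAT (or $3$-colouring) instance of size $n$ written as a constraint graph $\mathcal{G}_0$. In $\mathcal{G}_0$ a satisfiable instance has value $1$, while an unsatisfiable one has at least one violated constraint, i.e.\ an unsatisfied fraction $\mathrm{unsat}(\mathcal{G}_0)\ge 1/|E_0| = 1/\poly(n)$.

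The core is a single amplification step $\mathcal{G}\mapsto\mathcal{G}'$ with four properties: perfect completeness is preserved, the size grows by at most a constant factor $C$, the alphabet returns to $\Sigma_0$, and
\[
\mathrm{unsat}(\mathcal{G}')\ \ge\ \min\{2\,\mathrm{unsat}(\mathcal{G}),\ \alpha\}
\]
for a universal constant $\alpha>0$. Iterating this $O(\log n)$ times yields size $C^{O(\log n)}\cdot\poly(n)=\poly(n)$ and a gap of at least $\alpha$, as desired.

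The step itself is the composition of three reductions, carried out in order.

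\begin{enumerate}
\item \emph{Preprocessing.} Make the graph $d$-regular for a constant $d$ and a good expander with self-loops. Replace each vertex by an expander cloud carrying equality constraints, then superimpose a constant-degree expander. This loses only a constant factor in $\mathrm{unsat}$.
\item \emph{Powering.} Form $\mathcal{G}^t$: vertices receive opinions on their radius-$t$ neighbourhoods, and constraints correspond to length-$\approx t$ random walks. The claim to establish is $\mathrm{unsat}(\mathcal{G}^t)\ge \Omega(\sqrt t)\cdot\min\{\mathrm{unsat}(\mathcal{G}),1/t\}$. The alphabet grows to $|\Sigma_0|^{d^{O(t)}}$, which is still a constant.
\item \emph{Alphabet reduction.} Compose with a constant-size assignment tester (PCP of proximity), for instance one built from the Hadamard code and linearity testing. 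This brings the alphabet back to $\Sigma_0$ while losing only a constant factor $\beta$ in the gap.
\end{enumerate}

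Choosing $t$ large enough that $\Omega(\sqrt t)\cdot\beta\cdot(\text{preprocessing loss})\ge 2$ gives the doubling.

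I expect the main obstacle to be the powering lemma. The plan there is the standard argument:
\begin{enumerate}
\item Define each vertex's ``plurality'' assignment from its neighbours' opinions at distance $\approx t/2$, and fix a violated edge set $F$ of that assignment.
\item Count, over random walks, the number $N$ of steps that traverse $F$ within the middle window where the endpoints' opinions are likely consistent with the plurality. The first moment gives $\mathbb{E}[N]\gtrsim \sqrt t\,|F|/|E|$, using the self-loops and the stopping-time structure of the walks.
\item Use expansion to bound the second moment by $\mathbb{E}[N^2]\lesssim \sqrt t\,|F|/|E|$.
\item Conclude $\Pr[N>0]\ge \mathbb{E}[N]^2/\mathbb{E}[N^2]$, truncating $F$ when $|F|/|E|>1/t$.
\end{enumerate}

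Finally, convert the resulting $2$-CSP, which has gap $\alpha$ over alphabet $\Sigma_0$, into label cover. Put constraints on the left and variables on the right, let a left label be a satisfying pair of values, and use the projection to the endpoint's value as the constraint. This keeps perfect completeness and gives soundness at most $1-\alpha/2$, proving $\textsc{GapLabelCover}[1,1-\Omega(1)]$ is $\NP$-complete with $\poly(n)$ instance size.
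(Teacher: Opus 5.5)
Your outline is correct and follows Dinur's gap-amplification proof: preprocessing to a regular expander with self-loops, powering, composition with an assignment tester, $O(\log n)$ rounds of doubling, and then the standard constraint--variable conversion to projection label cover; the paper gives no proof of its own and simply cites \cite{dinur2007pcp}, so this is the intended route. One point you should state explicitly is that the composition loss $\beta$ must not depend on the alphabet size $|\Sigma_0|^{d^{O(t)}}$ of $\mathcal{G}^t$ (Dinur ensures this by encoding each label with a constant-distance error-correcting code before invoking the assignment tester), since otherwise choosing $t$ after $\beta$ would be circular.
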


Note that here the improvement is in the soundness parameter, where we have control over how small the soundness error will be, by choosing the parameters of gap amplification. However, it is important to note that this technique also has some limitations. It was shown by Bogdanov \cite{bogdanov2005gap} that one cannot hope to reduce the soundness error to a constant below $1/2$. 

An original motivation for studying parallel repetition results \cref{lem:parallel_repetition} was to achieve an arbitrarily small soundness error (and in particular, a soundness error below $1/2$). The cost, however, appears in the instant size, as it is required that we increase the instant size and the alphabet size polynomially, where the polynomial depends on the number of copies of the original instance.

Finding constructions of PCPs where the overhead in instance size is as small as possible is one of the main directions of studying PCP theorems. Recently, Bafna et al.\! \cite{bafna2025quasi} proved a reduction to construct a label cover instance from a given 3-SAT with significantly better parameters, improving previous work.

\begin{lemma}[Theorem 1.2 in \cite{bafna2025quasi}]\label{lem:PCP2025}
    Let $\phi$ be a 3-SAT instance of size $n$, and let $s > 0$ be a constant. Then, there exists a constant $c=c(s)$ and a label cover instance $\Psi$ of size at most $n(\log n)^c$ and alphabet size at most $\poly(1/k)$ such that
    \begin{itemize}
        \item if $\phi$ is satisfiable, then $\val(\Psi)=1$,
        \item and if $\phi$ is unsatisfiable, then $\val(\Psi)\leq s$.
    \end{itemize}
\end{lemma}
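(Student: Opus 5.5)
This statement is quoted from \cite{bafna2025quasi}, and in the paper we would simply invoke it. To understand why it holds, I would reconstruct it as a size-efficient gap amplification in three stages. In what follows I read the alphabet bound ``$\poly(1/k)$'' as $\poly(1/s)$.

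\textbf{Stage 1 (constant-gap, quasi-linear size).} Start from a quasi-linear size PCP with constant soundness. The standard route is the Ben-Sasson--Sudan / BGHSV short PCPs combined with Dinur's gap amplification \cite{dinur2007pcp}. This turns the 3-SAT instance $\phi$ into a label cover (or 2-CSP) instance $\Psi_0$ with the following properties:
\begin{itemize}
\item its size is $n(\log n)^{O(1)}$;
\item its alphabet is constant;
\item it has perfect completeness;
\item its soundness is $1-\eps_0$ for some absolute constant $\eps_0$.
\end{itemize}
Since Dinur-style amplification cannot push soundness below $1/2$ (as noted via \cite{bogdanov2005gap}), a different mechanism is needed to reach an arbitrary constant $s$.

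\textbf{Stage 2 (derandomized parallel repetition on an HDX).} Amplify the gap from $1-\eps_0$ to $s$ by a derandomized direct product. Here is the construction.
\begin{itemize}
\item Embed the variables of $\Psi_0$ into the vertices of a bounded-degree high-dimensional expander whose number of faces is only $n(\log n)^{O(1)}$, for instance a Chapman--Lubotzky type complex with coboundary expansion.
\item Let the new left labels be assignments to $k$-faces and the right labels be assignments to smaller sub-faces. The constraints check that a face's assignment satisfies all local constraints of $\Psi_0$ inside it and projects consistently onto the sub-face.
\end{itemize}
Completeness is immediate, and the projection property of \cref{def:label-cover} holds by construction. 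The size stays quasi-linear because the complex is sparse.

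Soundness is where the real work lies. One needs a low-soundness agreement (direct product) theorem on the complex: if the test passes with probability above $s$, then there is a global assignment to the original variables that agrees with a noticeable fraction of face-assignments. Given such a global assignment, the sampling properties of the HDX show that it satisfies more than a $1-\eps_0$ fraction of the constraints of $\Psi_0$. This contradicts the unsatisfiability of $\phi$ once $k=k(s)$ is a large enough constant. I expect this low-soundness agreement theorem on sparse complexes, which requires coboundary-type expansion, to be the main obstacle.

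\textbf{Stage 3 (alphabet reduction).} The alphabet after Stage 2 is $|\Sigma_0|^{O(k)}$. To bring it to $\poly(1/s)$, compose with an efficient low-soundness decodable PCP (Reed--Muller/Reed--Solomon based, as in \cite{moshkovitz2008two,dinur2015lowerrorPCP}) acting on each face's constraint. Because each face has constant size depending only on $s$, this composition multiplies the instance size by a factor depending only on $s$. The final size is therefore $n(\log n)^{c(s)}$, and the composition keeps soundness at $O(s)$. Rescaling $s$ at the start then yields the statement.
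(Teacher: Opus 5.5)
The paper gives no proof of this statement. It imports it as Theorem 1.2 of \cite{bafna2025quasi}, where the ``$\poly(1/k)$'' is a typo for the soundness parameter, so your reading $\poly(1/s)$ is right. Invoking it, as your first sentence says, is all the paper does. Your reconstruction, however, has a genuine gap in the first bullet of Stage 2. Placing the \emph{variables} of $\Psi_0$ on the vertices of a Chapman--Lubotzky-type complex does nothing to place its \emph{constraints} inside the faces. The constraint graph of $\Psi_0$ coming out of Stage 1 is an arbitrary graph, unrelated to the rigid, algebraically defined $1$-skeleton of the complex. So for a generic identification, all but a negligible fraction of the constraints of $\Psi_0$ have endpoints lying in no common face. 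The left-vertex check ``the face assignment satisfies all constraints of $\Psi_0$ inside it'' then becomes essentially vacuous. Restricting any single global assignment to the faces satisfies your label cover with probability close to $1$, even when $\phi$ is unsatisfiable. Your sampling step (``$g$ satisfies more than a $1-\eps_0$ fraction of the constraints of $\Psi_0$'') has nothing to work with: it needs the constraints to be the edges of the complex, and to be distributed like them.

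The missing idea is exactly what Bafna, Minzer and Vyas describe as the main new ingredient of their proof. They give a transformation that re-embeds an arbitrary constant-gap 2-CSP, at quasi-linear cost, as a 2-CSP whose constraint graph \emph{is} the underlying graph of a prescribed, sufficiently good HDX. They implement it with routing protocols on HDX that tolerate adversarial edge corruptions, borrowing from the almost-everywhere agreement literature going back to Dwork, Peleg, Pippenger and Upfal. They also need polylogarithmic-degree variants of the complexes and of the direct-product testers (constructed in Yun's appendix), not bounded-degree ones. By contrast, the low-soundness agreement theorem on sparse complexes that you flag as the main obstacle was already available from Bafna--Lifshitz--Minzer and Dikstein--Dinur--Lubotzky; there it is used essentially as a black box. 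Insert this embedding step between your Stage 1 and Stage 2, and run the agreement-plus-sampling argument on the embedded CSP rather than on $\Psi_0$ directly. Your outline then becomes a reasonable sketch of their strategy.
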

An important property of their construction, which is proved by using properties of high-dimensional expanders, is that not only is the blowup in size quasi-linear, but also one can set the soundness error to be any arbitrary small value (while paying a moderate price in the instant size and alphabet size). 

Another result that we will rely on in this work is based on the work of Dinur, Harsha, and Kindler \cite{dinur2015lowerrorPCP}, which we state as the next lemma.

\begin{lemma}[Theorem 1.4 in \cite{dinur2015lowerrorPCP}]\label{lem:low-error-PCP}
Every language in $\NP$ has a PCP verifier that tosses $O(\log n)$ random bits, makes $(\log \log n)^{O(1)}$ queries into a proof over an alphabet $\Sigma$ of size $|\Sigma|=n^{1/(\log\log n)^{O(1)}}$, has perfect completeness and soundness error $1/\poly(n)$.
\end{lemma}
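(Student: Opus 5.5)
This statement is quoted from \cite{dinur2015lowerrorPCP}, and in this paper we simply invoke it. If we had to reprove it, we would follow the modular-composition paradigm: first build a very low-error outer PCP with a large alphabet, then repeatedly shrink the alphabet by composing with inner decodable PCPs, while keeping the error polynomially small throughout.

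\textbf{Outer PCP.} We would start from an algebraic PCP over a field $\mathbb{F}$ with $|\mathbb{F}|=\poly(n)$. The 3-SAT assignment is encoded as a low-degree polynomial on $\mathbb{F}^m$, where $m=\Theta(\log n/\log\log n)$ or a comparable choice. The verifier runs a low-degree test of the ``plane versus point'' (or curve versus point) type, using the Raz--Safra or Arora--Sudan analysis, which has soundness $1/\poly(|\mathbb{F}|)=1/\poly(n)$ in the list-decoding regime. We would combine this with a sum-check or zero-testing step along curves.

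This yields a verifier with $O(\log n)$ random bits, $O(1)$ queries, perfect completeness, and error $1/\poly(n)$. Its alphabet, however, is large: a plane restriction has size $n^{\Omega(1)}$. We would also make this PCP \emph{robust}, meaning that in a no instance most local views are far from accepting, not merely rejected. This is done by aggregating queries into a single ``robust'' query.

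\textbf{Composition.} The main step is to reduce the alphabet by composing with a \emph{decodable} PCP (dPCP) with list-decoding soundness, in the style of Dinur--Harsha. The plan has three parts:
\begin{itemize}
\item Each outer local test over alphabet size $N$ is replaced by an inner dPCP that checks the test and decodes the relevant symbol.
\item The inner dPCP is itself a Reed--Muller-based construction on inputs of size $\log N$, so its alphabet is $N^{1/(\log\log n)^{O(1)}}$, or at least $N^{\delta}$ for a small $\delta$.
\item Each level multiplies the number of queries by a constant and adds $O(\log n)$ randomness in total, provided the randomness of the inner levels shrinks geometrically.
\end{itemize}
After $t=O(\log\log\log n)$ levels of composition, or equivalently a schedule where each level lowers the exponent by a constant factor, the alphabet becomes $n^{1/(\log\log n)^{O(1)}}$. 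The number of queries becomes $2^{O(t)}$, or $(\log\log n)^{O(1)}$ if each level costs a $\log\log n$ factor.

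\textbf{Main obstacle.} The hardest part is controlling the soundness error through composition. With unique decoding, errors would blow up. We would therefore use list-decoding dPCPs: a cheating inner proof is explained by a short list of $L=\poly(1/\varepsilon)$ outer symbols. The outer robust PCP then has to tolerate this, which costs a factor $L$ in the outer error.

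To keep the total error at $1/\poly(n)$, we would require each level's list size and error to be polynomial in the desired final error. The errors then add up over the $t$ levels rather than multiply. Establishing this modular composition theorem, namely that robust outer PCP $\circ$ list-decodable dPCP yields additive error, is the technical heart. We would try to prove it first, by analysing the plurality decoding of inner proofs against the robustness parameter of the outer verifier.
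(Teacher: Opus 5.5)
The paper gives no proof of this lemma. It imports Theorem~1.4 of \cite{dinur2015lowerrorPCP} verbatim and uses it only as a black box in the proof of \cref{thm:logn-leakage}, so your first sentence matches the paper's treatment completely. The reconstruction you sketch has a genuine gap, and it sits exactly at the step you call its technical heart.

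\textbf{The composition step.} With list-decoding soundness, composing an outer PCP of error $\varepsilon_{\mathrm{out}}$ with a dPCP of error $\varepsilon_{\mathrm{in}}$ and list size $L=\poly(1/\varepsilon_{\mathrm{in}})$ gives error about $\varepsilon_{\mathrm{in}}+L\cdot\varepsilon_{\mathrm{out}}$. You note this cost yourself, and the list-decoding guarantee alone gives you no handle to remove the factor $L$.
\begin{itemize}
\item Iterate $t$ times, with every inner error $1/\poly(n)$ as you require. Then the initial outer PCP must have error $n^{-\Omega(t)}$.
\item That needs $\Omega(t\log n)$ random bits, which is $\omega(\log n)$ once $t=\omega(1)$.
\end{itemize}
So ``errors add rather than multiply'' cannot be proved for list-decodable dPCPs. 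This is precisely the obstacle Dinur, Harsha and Kindler overcome. Their main new ingredient is a different soundness notion, \emph{distributional soundness}, which replaces list-decoding soundness. It yields a composition theorem, paying one extra consistency query, whose error does not blow up over super-constantly many compositions. That idea is what your plan is missing.

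\textbf{The parameter accounting.} This does not close either.
\begin{itemize}
\item With $|\mathbb{F}|=\poly(n)$ and $m=\Theta(\log n/\log\log n)$, the point table alone has $n^{\Theta(\log n/\log\log n)}$ entries. So $O(\log n)$ randomness forces $m=O(1)$.
\item With $m=O(1)$, a plane restriction consists of $n^{\Omega(1)}$ field elements. The outer alphabet therefore has size $2^{n^{\Omega(1)}}$, not $n^{O(1)}$. An $O(1)$-query verifier with $O(\log n)$ randomness, error $1/\poly(n)$ and alphabet $n^{O(1)}$ would already establish \cref{conj:sliding-scale} at $s=1/\poly(n)$.
\item Starting from $2^{n^{\Omega(1)}}$, a schedule $N\mapsto N^{\delta}$ over $O(\log\log\log n)$ levels cannot reach $n^{1/(\log\log n)^{O(1)}}$.
\item Even shrinking the description length polynomially per level takes $\Theta(\log\log n)$ levels. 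If queries multiply by a constant per level, you end with $\polylog(n)$ queries rather than $(\log\log n)^{O(1)}$.
\end{itemize}

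\textbf{Your own error requirement.} You require every level to have error $1/\poly(n)$. Then each Reed--Muller inner dPCP works over a field of size $\poly(n)$, which has two consequences.
\begin{itemize}
\item Each level needs $\Omega(\log n)$ fresh random bits, so the inner randomness cannot shrink geometrically.
\item Its proof symbols contain elements of that field, so the final alphabet never drops below $n^{\Omega(1)}$.
\end{itemize}
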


This result is different in a sense from the ones stated before, since the focus is not on keeping the query complexity constant. Instead, what they focus on and prove is inverse polynomial soundness error while keeping the query complexity as small as possible. This is not known to be achievable in the constant query regime\footnote{This problem is known as the Sliding Scale Conjecture. We will discuss it in \cref{sec:conclusion}.}.

We now turn our attention to the view of the class of problems admitting a PCP proof and a 2-query verifier. We can state the class of problems with such properties as follows.
\begin{definition}[PCP class]\label{def:PCP}
    Let $\PCP_{1,s}[r,2]_\Sigma$ be the class of languages that admit a 2-query verifier that queries a proof over alphabet $\Sigma$ and has with perfect completeness, soundness $s$, and randomness complexity $r$.
\end{definition}

An immediate corollary of \cref{lem:PCP2025} stated in terms of \cref{def:PCP} is the following.
\begin{corollary}\label{cor:good-PCP}
    For every positive $s > 0$, we have $\NP \subseteq \PCP_{1,s}[\log n + O_s(\log\log n),2]_\Sigma$, where $|\Sigma| =O_s(1)$. Here, $O_s(\cdot)$ hides a constant that only depends on $s$.
\end{corollary}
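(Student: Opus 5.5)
The plan is to derive the corollary directly from \cref{lem:PCP2025} by the standard translation from a label cover instance to a 2-query PCP verifier.

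Fix $s>0$ and a language $L\in\NP$. Given an input $z$ of size $n$, first apply a polynomial-time (indeed quasi-linear, via Cook--Levin style reductions) reduction to a 3-SAT instance $\phi$. To keep $\log n + O(\log\log n)$ randomness, I will use that every $\NP$ language reduces to 3-SAT of size $n\,\polylog(n)$. Any polynomial blow-up $n^k$ would turn $\log n$ into $k\log n$, so this is the step where care is needed. If only a polynomial reduction is allowed, I will interpret $n$ as the 3-SAT instance size, which is the usual convention in this literature.

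Next, invoke \cref{lem:PCP2025} with soundness parameter $s$. This gives a label cover instance $\Psi=(G=(L\cup R,E),\Sigma_L,\Sigma_R,\Phi)$ of size at most $n(\log n)^{c(s)}$ and alphabet size $O_s(1)$. The statement writes $\poly(1/k)$, which I read as $\poly(1/s)$.

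The proof string consists of a label in $\Sigma_L$ for each vertex of $L$ and a label in $\Sigma_R$ for each vertex of $R$. Both alphabets embed into a single alphabet $\Sigma=\Sigma_L\sqcup\Sigma_R$ of size $O_s(1)$, and a symbol of the wrong type is treated as a rejecting value.

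The verifier works as follows:
\begin{itemize}
\item It samples a uniformly random edge $e=(u,v)\in E$.
\item It queries the two positions $u$ and $v$.
\item It accepts iff $(A_L(u),A_R(v))\in\Phi_e$.
\end{itemize}
Uniform sampling over $|E|\le n(\log n)^{c}$ edges costs $\lceil\log|E|\rceil\le \log n + c\log\log n + 1 = \log n+O_s(\log\log n)$ random bits. If $|E|$ is not a power of two, I would pad by duplicating edges, which changes the acceptance probabilities by at most a factor arbitrarily close to $1$; alternatively, run \cref{lem:PCP2025} with a slightly smaller soundness to absorb this.

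For completeness, a satisfiable $\phi$ gives $\val(\Psi)=1$, so some assignment makes the verifier accept with probability $1$. For soundness, any proof string defines assignments $A_L,A_R$ (after mapping wrong-type symbols to rejection), and the acceptance probability is exactly $\val_\Psi(A_L,A_R)\le\val(\Psi)\le s$.

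The only real obstacle is bookkeeping the randomness: accounting for the reduction to 3-SAT and for non-power-of-two edge counts so the randomness stays $\log n+O_s(\log\log n)$. Everything else is immediate from \cref{lem:PCP2025}.
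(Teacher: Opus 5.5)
Your argument is correct and is essentially the paper's own, which simply calls the statement an immediate corollary of \cref{lem:PCP2025}: sample a uniformly random edge of the label cover instance, query its two endpoints, and use $\log|E| \le \log n + O_s(\log\log n)$ random bits. One caveat: your first option, that every $\NP$ language reduces to 3-SAT of size $n\,\polylog(n)$, is not a known fact, since Cook--Levin only gives size $T(n)\,\polylog T(n)$ for $\mathsf{NTIME}(T(n))$; so the fallback reading of $n$ as the 3-SAT instance size is the one that makes the statement correct, both in your proof and in the paper.
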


Throughout this work, we will use the known fact that there are equivalences between 1-round 2-prover multi-prover interactive proofs and PCPs. One equivalence is as follows.

\begin{fact}\label{fact:PCPgame direct}
    $\PCP_{1,s}[r,2]_\Sigma = \MIP_{1,s}[r,\log (|\Sigma|)]$.
\end{fact}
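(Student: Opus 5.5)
The statement to prove is \cref{fact:PCPgame direct}, $\PCP_{1,s}[r,2]_\Sigma = \MIP_{1,s}[r,\log(|\Sigma|)]$. I would prove it by translating between 2-query PCP verifiers and two-player games in both directions. Throughout, I take the PCP queries to be to two separate proof parts (left and right), as in a label cover instance $\Psi$ with left vertices $L$ and right vertices $R$. If a given verifier queries a single proof, I first make it bipartite by viewing the proof as two copies and letting the verifier query one location in each copy. This changes neither completeness nor soundness, because any cheating pair of proofs corresponds to a pair of assignments.

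\textbf{PCP to MIP.} The verifier samples its $r$ random bits $\rho$ and computes the two query locations $x(\rho)\in L$ and $y(\rho)\in R$. It sends $x$ to the first prover and $y$ to the second. It then accepts the answers $(a,b)\in\Sigma\times\Sigma$ exactly when the PCP predicate $\Phi_\rho(a,b)$ holds. The question distribution $\pi$ is the pushforward of the uniform $\rho$, so it is samplable in polynomial time, and the predicate $\sV$ is computable in polynomial time. The question size is bounded by the proof length, which is at most $2^r$ locations, so the question length is $r$. The answer size is $\log|\Sigma|$.

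For completeness, the honest provers answer with the correct proof, which is accepted with probability $1$. For soundness, by the preliminaries the classical value is attained by deterministic strategies $a\colon\cX\to\Sigma$ and $b\colon\cY\to\Sigma$. Such a pair is exactly a proof (a pair of assignments $A_L=a$, $A_R=b$), and its acceptance probability equals the PCP acceptance probability. Hence $\omega_c(G)\le s$.

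\textbf{MIP to PCP.} Given a game $G$ with question length $r$ and answer length $\log|\Sigma|$, I would write the proof as the two tables $a(\cdot)$ over $\cX$ and $b(\cdot)$ over $\cY$, with entries in $\Sigma$. The PCP verifier samples $(x,y)\sim\pi$, queries the entry $a(x)$ and the entry $b(y)$, and accepts when $\sV(a(x),b(y)|x,y)=1$. Completeness and soundness transfer through the same equality between deterministic strategies and proofs.

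\textbf{Main obstacle.} The one delicate point is the randomness complexity. Sampling from $\pi$ may use more than $r$ coins, whereas $r$ bounds the question length. I would handle this by the convention that the MIP parameter $r$ counts the verifier's random bits, equivalently by taking $\pi$ to be the image of $r$ uniform bits. Under that convention the two directions match exactly. Otherwise the inclusion holds up to the standard equivalence in which parameters are equal up to constant factors, which is how the fact is used later in the paper.
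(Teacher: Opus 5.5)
Your core argument matches the paper's. The paper only sketches the PCP-to-MIP direction: sample an edge $(u,v)$ of the label cover, send $u$ to one prover and $v$ to the other, and check the constraint. Identifying deterministic strategies with pairs of assignments, adding the converse direction, and noting that sampling $\pi$ may need more than $r$ coins are all sound additions that the paper leaves implicit.

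The one step that fails is your preliminary claim that a verifier querying a single proof can be made bipartite, with no loss in soundness, by querying one location in each of two copies of the proof. The original verifier's soundness holds only against a single proof. A pair of independent copies gives the cheating provers strictly more freedom.

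For a concrete counterexample, take an odd cycle $C_n$ over the alphabet $\{0,1\}$. The verifier picks a random edge $\{u,v\}$ and accepts iff the two queried symbols differ. It rejects every single proof with probability at least $1/n$, yet the pair $\pi_1\equiv 0$, $\pi_2\equiv 1$ is always accepted.

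This is precisely why general 2-query PCPs are handled by the consistency test of \cref{fact:PCPgame}, which costs soundness $(1+s)/2$ and doubles the question and answer lengths. So drop that sentence. Read the fact, as the paper implicitly does via label cover, as a statement about bipartite verifiers whose two queries go to separate parts of the proof. Under that reading your proof is complete.
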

For clarity, let us describe the property above more formally. Let us consider a verification protocol to sample an edge $e=(u,v)$, send each vertex to each prover, and check that two assignments satisfy the constraint.

The other is from the consistency test.

\begin{fact}\label{fact:PCPgame}
    $\PCP_{1,s}[r,2]_\Sigma = \MIP_{1,(1+s)/2}[2r,2\log (|\Sigma|)]$.
\end{fact}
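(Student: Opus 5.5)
The statement asserts an equality of classes, so I will prove the two inclusions separately. The forward inclusion uses the standard ``oracularization'' (consistency-test) construction, for which I have a concrete plan. The reverse inclusion is the step I am least sure of, and I expect it to be the main obstacle.

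\textbf{Forward inclusion $\PCP_{1,s}[r,2]_\Sigma\subseteq\MIP_{1,(1+s)/2}[2r,2\log|\Sigma|]$.}
\begin{enumerate}
\item Fix a 2-query PCP verifier with randomness $r$ and alphabet $\Sigma$, and let $\pi$ denote a proof. Each random string $\rho\in\{0,1\}^r$ determines two positions $i_\rho,j_\rho$ and a predicate on $(\pi_{i_\rho},\pi_{j_\rho})$.
\item The game verifier samples $\rho$ and sends $\rho$ to the first prover, who answers with a pair in $\Sigma^2$; this is $2\log|\Sigma|$ bits.
\item It then picks a uniformly random $k\in\{i_\rho,j_\rho\}$ and sends $k$ to the second prover, who answers with one symbol.
\item It accepts iff the first prover's pair satisfies the PCP predicate and agrees with the second prover's symbol at position $k$.
\item Question lengths are at most $\max(r,\log(\text{proof length}))\le 2r$, since the proof length is at most $2\cdot 2^r$.
\item Completeness: both provers answer according to an accepting proof $\pi$, so the verifier accepts with probability $1$.
\item Soundness: fix a deterministic second-prover strategy; it is a proof $\pi$. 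For the at least $1-s$ fraction of $\rho$ on which $\pi$ is rejected, the first prover must either violate the predicate or disagree with $\pi$ on one of the two positions. In the latter case the disagreement is caught with probability $\ge 1/2$. Hence the value is at most $s+(1-s)/2=(1+s)/2$.
\end{enumerate}

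\textbf{Reverse inclusion.} Given a game in $\MIP_{1,(1+s)/2}[2r,2\log|\Sigma|]$, I would first apply \cref{fact:PCPgame direct}. The proof is the concatenation of the two provers' answer tables, and the verifier samples $(x,y)$ and queries the two table entries. This gives a 2-query PCP with perfect completeness and soundness $(1+s)/2$, but with randomness $2r$ and alphabet $\Sigma^2$. I then need to recover randomness $r$, alphabet $\Sigma$ and soundness $s$.

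\begin{enumerate}
\item I would try to reverse the oracularization. Split each answer in $\Sigma^2$ into its two $\Sigma$-coordinates and treat these as separate proof positions, so the alphabet becomes $\Sigma$.
\item I would then argue that when the game has the consistency-test form above, only $r$ bits of the question are genuinely random, namely $\rho$ and the choice of $k$. Reading the first prover's table at the two positions that $\rho$ points to recovers the original verifier and its soundness $s$.
\end{enumerate}

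\textbf{Main obstacle.} This decoding is the crux. For an arbitrary game with these parameters it is not clear that either the randomness can be halved or the soundness improved from $(1+s)/2$ back to $s$ without changing other parameters. Soundness amplification, for example by \cref{lem:parallel_repetition}, would blow up the alphabet and randomness. My first attempt is therefore to check whether every game meeting the stated parameter bounds can be put in, or simulated by, the consistency-test form; this is where I expect the argument to need the most care. If that fails, the reverse direction can only be obtained with the weaker parameters supplied by \cref{fact:PCPgame direct}.
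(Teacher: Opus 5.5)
Your forward inclusion is correct and is the same consistency-test (oracularization) argument the paper sketches. Your soundness computation fixes the second prover's table as a proof $\pi$ and catches the first prover with probability at least $1/2$ on each rejected $\rho$. That supplies the step the paper leaves implicit. It is also the only direction the paper uses (in \cref{thm:mip-leakage-NP} and \cref{thm:mip leakage}).

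The paper gives no argument for the reverse containment, and your suspicion is justified. With these exact parameters the reverse containment is false in general unless $\NP=\PLang$, so your decoding step cannot be repaired.

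\begin{enumerate}
\item Take $r=O(\log n)$, and let $C$ be the constant alphabet size of the label cover from \cref{lem:PCP2025} at soundness $1/2$.
\item Choose a constant $|\Sigma|$ with $|\Sigma|^2\ge C$, and then any constant $s<|\Sigma|^{-2}$.
\item Then $\PCP_{1,s}[r,2]_\Sigma\subseteq\PLang$. Enumerate the $2^r$ random strings and accept iff every local predicate has an accepting pair of answers. Yes-instances pass by perfect completeness. On a no-instance, if every local predicate were satisfiable, a uniformly random proof would be accepted with probability at least $|\Sigma|^{-2}>s$, contradicting soundness.
\item On the other hand, the edge-sampling label-cover game from \cref{lem:PCP2025} gives $\NP\subseteq\MIP_{1,1/2}[\log n+O(\log\log n),\log C]$.
\item After padding questions and answers, this class lies inside $\MIP_{1,(1+s)/2}[2r,2\log|\Sigma|]$ for a suitable $r=O(\log n)$, since $1/2\le(1+s)/2$.
\end{enumerate}

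Equality of the two classes would therefore give $\NP\subseteq\PLang$. The underlying reason is the one you identified. Reading the first prover's table at the positions given by $\rho$ presupposes the game is already an oracularized 2-query PCP, which an arbitrary member of the $\MIP$ class need not be. Soundness also cannot be improved from $(1+s)/2$ back to $s$ for free.

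So the fact should be read, and proved, as the containment $\PCP_{1,s}[r,2]_\Sigma\subseteq\MIP_{1,(1+s)/2}[2r,2\log|\Sigma|]$. Your argument establishes exactly that.
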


This equivalence can be showed in the following way: Considering the label cover instance, the verifier samples an edge $e=(u,v)$ and asks the first prover for the assignments to its endpoints $u$ and $v$. Then, it will choose one endpoint uniformly at random, and asks the second prover for the assignment to that endpoint. Here, the second question serves as a consistency test to make sure that the first and second provers are answering with respect to one fixed assignment that they have in their minds. Indeed, this equivalence heavily relies on the fact that there is no communication between the provers, and it is unclear what happens if limited communication is allowed.

By scaling up the PCP theorem, the following succinct property of $\MIP$ protocols for $\NEXP$ is also known.

\begin{fact}\label{fact:MIP=NEXP succint}
    $\MIP_{1,\frac{1}{2}}[\poly(n), O(1)] = \NEXP$.
\end{fact}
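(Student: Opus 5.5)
The plan is to prove the two inclusions separately. The interesting direction is $\NEXP \subseteq \MIP_{1,\frac12}[\poly(n),O(1)]$, which we get by scaling up the PCP theorem and then fixing the soundness with a constant number of parallel repetitions.

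\textbf{The easy inclusion.} For $\MIP_{1,\frac12}[\poly(n),O(1)] \subseteq \NEXP$, recall that $\omega_c(G_z)$ is attained by a deterministic strategy, that is, by a pair of functions $a:\cX\to\cA$ and $b:\cY\to\cB$. With question length $\poly(n)$, each function has a description of size $2^{\poly(n)}\cdot O(1)$. A nondeterministic exponential-time machine guesses $(a,b)$. It then enumerates all $2^{\poly(n)}$ random strings of the polynomial-time sampler for $\pi$ and evaluates $\sV$ on each, which computes $\val(a,b)$ exactly. It accepts iff $\val(a,b)=1$. Completeness $c=1$ and soundness $s=\frac12$ make this a correct decision procedure.

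\textbf{The hard inclusion, step 1: a succinct scaled-up PCP.} Take $L\in\NEXP$. By the scaled-up Cook--Levin theorem, $z\in L$ reduces to satisfiability of a \emph{succinct} 3-SAT instance $\phi_z$. This instance has $N=2^{\poly(n)}$ variables and clauses, and any clause can be computed from its index in $\poly(n)$ time.

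Next apply the PCP theorem (in the label-cover form of \cite{arora1998probabilistic,arora1998proof} or \cite{dinur2007pcp}) to $\phi_z$. The output is a label cover instance $\Psi_z$ with the following properties:
\begin{itemize}
\item it has $\poly(N)=2^{\poly(n)}$ edges, constant alphabets, perfect completeness, and soundness $1-\eps$ for a constant $\eps>0$;
\item the reduction is \emph{locally computable}: given an edge index of $\poly(n)$ bits, one can compute its endpoints and its projection constraint $\phi_e$ in $\poly(n)$ time, by making polynomially many queries to the succinct description of $\phi_z$.
\end{itemize}

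\textbf{Step 2: the game and soundness reduction.} Fact~\ref{fact:PCPgame direct} turns $\Psi_z$ into a two-prover one-round game. The verifier samples a uniform edge and sends one endpoint to each prover. Questions have length $\poly(n)$, answers have length $O(1)$, sampling and predicate evaluation run in polynomial time, and the classical value is $1$ on yes-instances and at most $1-\eps$ on no-instances.

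To bring the soundness down to $\frac12$, apply Lemma~\ref{lem:parallel_repetition} with a constant number $k$ of repetitions. Since $s=\log|\cA\times\cB|+1=O(1)$ and $\eps$ is constant, some constant $k$ gives $(1-\eps^c)^{\Omega(k/s)}\le\frac12$. Perfect completeness is preserved under repetition. Questions remain $\poly(n)$ bits and answers grow only by the constant factor $k$, so remain $O(1)$.

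\textbf{Main obstacle.} The main difficulty is step 1: verifying that the PCP reduction is uniform and local enough to run on an exponentially large but succinctly described instance. The natural route is to use the algebraic construction (low-degree extension plus sum-check, as in \cite{babai1991mip,arora1998probabilistic}), because its constraints are explicit polynomial-time computable functions of the random string. If instead we go through composition or Dinur's amplification, we must check that every stage (expander replacement, powering, alphabet reduction) preserves polylogarithmic-in-$N$ local computability of the edges and constraints. Each of those stages uses only explicit constructions, so this check should go through.
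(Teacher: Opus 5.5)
Your proposal is correct and takes essentially the same approach as the paper. The paper gives no proof and cites this as a known consequence of scaling up the PCP theorem; your argument fills that in. It uses succinct 3-SAT, a locally computable PCP/label-cover reduction with constant answers, and a constant number of parallel repetitions to bring the soundness to $\frac12$. For the reverse inclusion you guess a deterministic strategy and check it in exponential time.
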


%% file: proof.tex
\section{Proofs for main results}
In the next part, we formally define the notions of multi-prover interactive proofs with leakage, and then proceed to show our results on the power of these complexity classes.

\subsection{Definitions of $\MIP$ and $\MIP^*$ protocols with leakage}\label{sec:def of MIP with leakage}

Let us define a $\MIP[q,a,\ell]$ protocol as a two-prover one-round $\MIP$ protocol with question size $q$ and answer size $a$, where the provers conduct $\ell$-bit communication between themselves based on their questions received from the verifier. The provers then send some answers back to the verifier based on their communication and questions. We also use $\MIP[q,a,\ell]$ as the set of languages recognized by the $\MIP[q,a,\ell]$ protocols. A formal definition follows.

\begin{definition}[$\MIP$ with leakage]\label{def:mip-leakage}
    We say a language $L$ is in $\MIP_{c,s}[q,a, \ell]$ if and only if, for $z \in \{0,1\}^*$, there exists a family of non-local games $G_z = (\pi, \cX \times \cY, \cA \times \cB, \mathsf{V})$ satisfying the following properties: 
    \begin{enumerate}
        \item The question size is $q$ and the answer size is $a$.
        \item There is an algorithm running in $\poly(|z|)$-time, given $z$, to sample questions $(x,y) \in \cX \times \cY$ from $\pi$.
        \item There is an algorithm running in $\poly(|z|)$-time, given $z$ and $(x,y,a,b) \in \cX \times \cY \times \cA \times \cB$, to output $\sV(a,b|x,y) \in \{0,1\}$.
        \item If $z \in L$, $\omega_c(G_z) = c$.
        \item If $z \notin L$, and even if the provers are allowed to communicate at most $\ell$ bits, we must have $\omega_c(G_z) \leq s$.
    \end{enumerate}
\end{definition}

When we do not specify completeness $c$ and soundness $s$, it should be assumed that $c=1$ and $s={1}/{2}$. 

Similarly, let us define a $\MIP^*[q,a,\ell]$ protocol and $\MIP^*[q,a,\ell]$ as the set of languages recognized by $\MIP^*[q,a,\ell]$ protocols. In a $\MIP^*[q,a,\ell]$ protocol, the provers can exploit the shared entanglement for their communication.

\begin{definition}[$\MIP^*$ with leakage]\label{def:mip*-leakage}
    We say a language $L$ is in $\MIP^*[q,a,\ell]$ if and only if, for an input $z \in \{0,1\}^*$, there exists a family of non-local games $G_z = (\pi, \cX \times \cY, \cA \times \cB, \mathsf{V})$ satisfying the following properties: 
    \begin{enumerate}
        \item The question size is $q$ and the answer size is $a$.
        \item There is an algorithm running in $\poly(|z|)$-time, given $z$, to sample questions $(x,y) \in \cX \times \cY$ from $\pi$.
        \item There is an algorithm running in $\poly(|z|)$-time, given $z$ and $(x,y,a,b) \in \cX \times \cY \times \cA \times \cB$, to output $\sV(a,b|x,y) \in \{0,1\}$.
        \item If $z \in L$, $\omega_q(G_z) = 1$.
        \item If $z \notin L$, and even if the provers are allowed to communicate at most $\ell$ bits, $\omega_q(G_z) \leq \frac{1}{2}$.
    \end{enumerate}
\end{definition}

Note that for $\ell \geq q$, $\MIP[q,a,\ell] \subseteq \AM$ and $\MIP^*[q,a,\ell] \subseteq \AM$ because the provers can share their questions by communication and the game value becomes trivial.

We also define a notion of multi-prover interactive proofs, in which the communication between the provers is one-way. In this scenario, the verifier asks a question from the first prover, then the first prover answers the question to the verifier and is allowed to send a bounded number of bits to the second prover. After that, the verifier will ask the second question from the second prover. This time, instead of using the terminology of non-local games, we give the definition in terms of the task of verifying a decision problem. This will make the corresponding results simpler to present. The formal definition follows.

\begin{definition}\label{def:oneway-leakage}
    A language $L$ is in $\MIP^{\rightarrow}_{c,s}[q,a,\ell]$ if and only if there exists a two-prover one-round interactive proof for $L$, in which the verifier $V$ runs in time $\poly(|x|)$ for every instance $x\in\{0,1\}^*$ and satisfies the following properties:
    \begin{itemize}
        \item \emph{Completeness:} If $x\in L$, then after exchanging one round of messages first with $P_1$ and then with $P_2$, the verifier $V$ accepts with probability at least $c$.
        \item \emph{Soundness:} If $x\notin L$, for every set of malicious provers $P_1$ and $P_2$, in which $P_1$ is allowed to send $\ell$ bits to $P_2$ after receiving the question from $V$, the verifier will accept with probability at most $s$.
    \end{itemize}
    Additionally, the question size is $q$ bits, and the answer size is at most $a$ bits.
\end{definition}

\subsection{The power of $\MIP$ with leakage}\label{sec:MIP}
We say a communication protocol is a (randomized) public coin communication protocol if the parties share some randomness before they receive inputs. A public coin communication protocol can be regarded as a convex mixture of deterministic protocols over the probabilistic distribution of public coins.

The following statement follows from the parallel repetition theorem. A similar statement is known in previous work (see, e.g., Corollary 7.3 in \cite{jain2022direct} and Theorem 8 in \cite{hasegawa2025maximum}).

\begin{lemma}\label{lem:classical parallel repetition theorem with leakage}
    Let $G$ be a non-local game such that $\omega_c(G) = 1-\eps$ for a constant $\eps>0$ and the answer size is constant. Let $\mathcal{P}$ be a public-coin $2$-way classical communication protocol for $G^{\otimes N}$ and suppose that the communication amount of $\mathcal{P}$ is $cN$ for a sufficiently small constant $c>0$. Then, the success probability of $\mathcal{P}$ is at most $2^{-\Omega(N)}$.
\end{lemma}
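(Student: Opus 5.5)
The plan is to reduce from communication protocols to non-communicating strategies by guessing the transcript, and then apply the parallel repetition theorem (\cref{lem:parallel_repetition}). A public-coin protocol is a convex combination of deterministic protocols, so its success probability is an average of theirs. It therefore suffices to bound the success probability of an arbitrary deterministic protocol $\mathcal{P}$ with at most $cN$ bits of communication on $G^{\otimes N}$. The inputs are $x=(x_1,\dots,x_N)$ and $y=(y_1,\dots,y_N)$, drawn from $\pi^{\otimes N}$.

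The key step is a simulation without communication. Fix a deterministic $\mathcal{P}$. Its transcript $\tau$ lies in a set of at most $2^{cN}$ strings (counting the variable-length prefix-free transcripts gives at most $2^{cN+1}$). I will build a classical strategy for $G^{\otimes N}$ with no communication, using shared randomness:
\begin{itemize}
\item Alice and Bob jointly sample a uniformly random candidate transcript $\tau'$ from the set of possible transcripts.
\item Alice acts as if $\tau'$ were the real transcript. She computes her final answer $a(x,\tau')$, which is determined by $x$ and $\tau'$ in a deterministic protocol.
\item Bob likewise outputs $b(y,\tau')$.
\end{itemize}
With probability at least $2^{-cN-1}$ the guess $\tau'$ equals the true transcript $\tau(x,y)$. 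In that event both players output exactly what $\mathcal{P}$ would output. Hence
\[
\Pr[\text{win with shared randomness}] \geq 2^{-cN-1}\Pr[\mathcal{P}\text{ wins}].
\]
We do not even need consistency checking: we only lower-bound the probability of the event that the guess is correct and $\mathcal{P}$ wins. A shared-randomness strategy is a convex combination of deterministic strategies, so the left-hand side is at most $\omega_c(G^{\otimes N})$.

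Next, apply \cref{lem:parallel_repetition}. Since the answer size of $G$ is constant, $s=\log|\cA\times\cB|+1=O(1)$. Since $\eps$ is a constant, we may assume $\eps\le 1/2$ by replacing $\eps$ with $\min\{\eps,1/2\}$, because the bound only weakens. This gives $\omega_c(G^{\otimes N})\le (1-\eps^{c_0})^{\Omega(N)}=2^{-\delta N}$ for some constant $\delta>0$ depending only on $\eps$ and the answer size; here $c_0$ is the universal constant of \cref{lem:parallel_repetition}. Combining the two bounds,
\[
\Pr[\mathcal{P}\text{ wins}]\le 2^{cN+1}\cdot 2^{-\delta N}.
\]
Choosing $c<\delta/2$ makes the right-hand side $2^{-\Omega(N)}$, as claimed.

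The main subtlety is the transcript-guessing step. We must be careful that, once a transcript is fixed, each player's behavior depends only on its own input. This holds for deterministic protocols: messages are functions of one's own input and the prior transcript, and the final outputs are functions of one's input and the full transcript. We must also count transcripts correctly when the protocol has variable length or multiple rounds. For that it suffices to note that the set of possible complete transcripts is prefix-free with total length at most $cN$, so it has size at most $2^{cN+1}$.
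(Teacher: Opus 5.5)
Your proposal is correct and takes essentially the same route as the paper. Both arguments simulate the $cN$-bit protocol without communication by having the players guess the transcript with shared randomness, which loses a factor of about $2^{-cN}$, and then apply \cref{lem:parallel_repetition} to get $p \le 2^{-(c'-c)N}$; your extra care (skipping the abort step, counting prefix-free transcripts, normalizing to $\eps \le 1/2$) only tightens details the paper glosses over.
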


\begin{proof}
    Let $c>0$ be a sufficiently small constant and $\mathcal{P}$ be a public-coin classical 2-way communication protocol for $G^{\otimes N}$ whose communication amount is $c N$. Let $p$ be the success probability of $\mathcal{P}$. 

    Let us consider a communication protocol $\mathcal{P}'$ in which the parties share uniformly random $cN$ bits in addition to the public coins, but have no communication. In the beginning of the protocol, Alice and Bob check that their random bits are equal to the $cN$ bits of the communication protocol, and if not, they abort. Let $\not \perp_A$ be the event that Alice does not abort during the protocol $\mathcal{P}$, and let $\not \perp_B$ be the event that Bob does not abort during the protocol $\mathcal{P}$.

    Define $W_{\mathcal{P}'}$ to be the event that $\mathcal{P}'$ outputs wins all the games (hence, without abortions). Then, from the definition of $\mathcal{P}'$, we have
    \[
        \mathrm{Pr} [ W_{\mathcal{P}'} | \not \perp_A, \not \perp_B] = p \enspace.
    \]
    Since randomness is sampled uniformly and independently from the public coins of the communication protocols, we also have 
    \[
        \mathrm{Pr} [ \not \perp_A, \not \perp_B ] = 2^{-c N}\enspace,
    \]
    which implies 
    \[
        \mathrm{Pr} [ W_{\mathcal{P}'} ] = 2^{-cN} p\enspace.
    \]
    Considering abortions, the success probability of $\mathcal{P}'$ is at least $2^{-cN} p$.
    From \cref{lem:parallel_repetition}, and since shared randomness does not change the value of the games, the success probability of $\mathcal{P}'$ is also at most $2^{-{c'}N}$ for a constant $c'$. Taking $c>0$ such that $c < c'$ and combining the upper and lower bound of the success probability, $p \leq 2^{-(c'-c)N}$.
\end{proof}

\begin{theorem}\label{thm:mip-leakage-NP}
    For any $\ell \in \mathbb{N}$ such that  $\ell \le \poly(n)$, 
    \[
        \MIP[O(\ell\log n),O(\ell),\ell] \supseteq \NP\enspace.
    \]
\end{theorem}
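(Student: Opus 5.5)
The plan is to take a succinct two-prover one-round game for $\NP$ with constant answer size and constant soundness gap, and then apply $N=\Theta(\ell)$-fold parallel repetition, invoking \cref{lem:classical parallel repetition theorem with leakage} to control $\ell$ bits of leakage.

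\textbf{Base game.} By \cref{cor:good-PCP} (say with $s=1/2$) and \cref{fact:PCPgame direct}, every language in $\NP$ has a game $G_z$ with the following properties:
\begin{itemize}
\item question size $r=\log n+O(\log\log n)=O(\log n)$;
\item answer size $\log|\Sigma|=O(1)$;
\item perfect completeness and $\omega_c(G_z)\le 1/2$ when $z\notin L$.
\end{itemize}
This is uniform in $z$ and polynomial-time samplable and checkable, since the label cover instance is produced by a polynomial-time reduction. The original PCP theorem together with label cover would also suffice, because only a constant gap $\eps$ is needed.

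\textbf{Repetition.} Let $c,c'$ be the constants from \cref{lem:classical parallel repetition theorem with leakage} applied to $G_z$. They depend only on $\eps=1/2$ and the constant answer size, so they are uniform in $z$. Set $N=\lceil \ell/c\rceil+C$, with $C$ a constant chosen large enough that $2^{-\Omega(N)}\le 1/2$ for every $\ell\ge 0$; this is possible because the exponent there is $(c'-c)N$ with fixed constants. Consider $G_z^{\otimes N}$.
\begin{itemize}
\item Its question size is $N\cdot O(\log n)=O(\ell\log n)$ and its answer size is $N\cdot O(1)=O(\ell)$.
\item Because $\ell\le\poly(n)$, sampling and verification remain polynomial time: the verifier samples $N$ independent edges and checks all $N$ constraints.
\item Completeness is preserved, since the honest provers answer each coordinate with the satisfying labeling and never need to communicate.
\end{itemize}

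\textbf{Soundness.} If $z\notin L$, fix any classical provers that exchange at most $\ell\le cN$ bits. Shared randomness can be fixed by averaging, and the interaction is a two-way communication protocol for $G_z^{\otimes N}$ of cost at most $cN$. \cref{lem:classical parallel repetition theorem with leakage} then bounds the winning probability by $2^{-(c'-c)N}\le 1/2$. This places $L$ in $\MIP[O(\ell\log n),O(\ell),\ell]$.

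\textbf{Main obstacle.} The delicate point is the bookkeeping of constants. We must confirm that the constant $c$ in the lemma, and the implicit constant in \cref{lem:parallel_repetition}, depend only on $\eps$ and the answer alphabet size, not on $n$ or the question size. \cref{lem:parallel_repetition} shows this is the case, since $s=\log|\cA\times\cB|+1=O(1)$. With that settled, $N=\Theta(\ell)$ simultaneously makes the leakage at most $cN$ and pushes the soundness below $1/2$. The case $\ell=0$ or small $\ell$ is handled by the additive constant $C$ in $N$.
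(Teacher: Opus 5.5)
Your proposal is correct and follows the paper's proof. In both, you take a constant-answer, constant-gap two-prover game for $\NP$, repeat it $\Theta(\ell)$ times in parallel, and invoke \cref{lem:classical parallel repetition theorem with leakage} for soundness.

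The differences are cosmetic. You start from the direct label-cover game of \cref{fact:PCPgame direct} rather than the consistency-test game of \cref{fact:PCPgame}. Your additive constant in $N$ also explicitly handles small $\ell$ (including $\ell=0$), which the paper's choice $N=k\ell$ glosses over.
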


\begin{proof}
    Let $G$ be a non-local game corresponding to the $\MIP$ protocol for $\NP$ in \cref{fact:PCPgame}. Then, let $k \in \mathbb{N}$ be a sufficiently large constant and consider a $\MIP$ protocol corresponding to $G^{\otimes k\ell}$ ($(k\ell)$-wise parallel repetition of $G$). Since perfect completeness holds in \cref{fact:PCPgame}, it also holds in the parallel repetition game. Applying \cref{lem:classical parallel repetition theorem with leakage} will prove the soundness also holds. The question and answer size of $G^{\otimes k\ell}$, given that $k$ is chosen to be some large constant, are $O(\ell \log n)$ and $O(\ell)$ respectively.
\end{proof}

\begin{theorem}\label{thm:mip leakage}
    For any $\ell \in \mathbb{N}$ such that  $\ell \le \poly(n)$,  
    \[
        \MIP[O(\ell) \cdot \poly(n),O(\ell),\ell] \supseteq \NEXP \enspace.
    \]
\end{theorem}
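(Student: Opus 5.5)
The plan mirrors the proof of \cref{thm:mip-leakage-NP}, using the succinct $\NEXP$ protocol of \cref{fact:MIP=NEXP succint} as the base game instead of the $\NP$ protocol from \cref{fact:PCPgame}. For an instance $z$ of a language in $\NEXP$, let $G_z$ be the non-local game from \cref{fact:MIP=NEXP succint}. It has question size $\poly(n)$, answer size $O(1)$, perfect completeness, and $\omega_c(G_z)\le 1/2$ when $z\notin L$. Fix a sufficiently large constant $k$ and let the new protocol be the parallel repetition $G_z^{\otimes k\ell}$. The verifier samples $k\ell$ independent question pairs from $\pi$, sends the tuples to the two provers, and accepts iff every coordinate satisfies $\sV$. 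Sampling and checking take $k\ell\cdot\poly(n)=\poly(n)$ time, since $\ell\le\poly(n)$. The question size is $k\ell\cdot\poly(n)=O(\ell)\cdot\poly(n)$ and the answer size is $k\ell\cdot O(1)=O(\ell)$, as the statement requires.

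Completeness is immediate. If $z\in L$, honest provers play the perfect strategy for $G_z$ independently in every coordinate and win all of them with probability $1$, using no communication at all.

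For soundness, let $z\notin L$ and consider any pair of provers exchanging at most $\ell$ bits. Any such strategy, possibly randomized, is a public-coin two-way communication protocol for $G_z^{\otimes k\ell}$ with communication $\ell = (1/k)\cdot N$, where $N=k\ell$. Here $\omega_c(G_z)=1-\eps$ with $\eps\ge 1/2$ a constant, and the answer size is constant. For $k$ large enough, $1/k$ falls below the threshold constant $c$ in \cref{lem:classical parallel repetition theorem with leakage}, so that lemma bounds the success probability by $2^{-\Omega(N)}=2^{-\Omega(k\ell)}$. Taking $k$ larger if needed makes this at most $1/2$ for every $\ell\ge 1$. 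When $\ell=0$, the bound $\omega_c(G_z)\le 1/2$ applies directly.

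The step needing the most care is making sure the hidden constants in \cref{lem:classical parallel repetition theorem with leakage} are uniform in the instance. The threshold $c$ and the decay rate come from \cref{lem:parallel_repetition}, and they depend only on $\eps$ and on $s=\log|\cA\times\cB|+1$, not on the question size $\poly(n)$. Since \cref{fact:MIP=NEXP succint} gives a uniform constant answer size and soundness $\le 1/2$ for every instance, $k$ can be chosen as an absolute constant, independent of $n$ and $\ell$. This uniformity is exactly why the constant-answer succinct protocol is used rather than an arbitrary $\MIP$ protocol for $\NEXP$. With it, the language lies in $\MIP[O(\ell)\cdot\poly(n),O(\ell),\ell]$.
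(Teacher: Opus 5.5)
Your proposal is correct and is essentially the paper's proof: repeat the constant-answer succinct $\NEXP$ game of \cref{fact:MIP=NEXP succint} $k\ell$ times in parallel for a large constant $k$, note that completeness carries over, and obtain soundness from \cref{lem:classical parallel repetition theorem with leakage} with communication $\ell = N/k$. Your extra care about the constants being independent of $n$ and $\ell$, and about playing a single copy when $\ell=0$ (i.e.\ $\max\{1,k\ell\}$ repetitions), only makes explicit what the paper leaves implicit.
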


\begin{proof}
    Let $G$ be a non-local game corresponding to the $\MIP$ protocol for $\NEXP$ in \cref{fact:MIP=NEXP succint}. Then, let $k \in \mathbb{N}$ be a sufficiently large constant and consider a $\MIP$ protocol corresponding to $G^{\otimes k\ell}$. Since perfect completeness holds in \cref{fact:PCPgame}, it also holds by parallel repetition. Applying \cref{lem:classical parallel repetition theorem with leakage} will show that the soundness also holds. The question and answer size of $G^{\otimes k\ell}$, given that $k$ is chosen to be some large constant, are $O(\ell) \cdot \poly(n)$ and $O(\ell)$ respectively.
\end{proof}

\subsection{The power of $\MIP^*$ with leakage}\label{sec:MIP*}

Next, we consider $\MIP^*$ protocols with leakage. Jain and Kundu \cite{jain2022direct} showed the following direct product theorem for the entanglement-assisted quantum communication complexity. 

\begin{lemma}[Part of Theorem 1.1 in \cite{jain2022direct}]\label{lem:quantum dpt}
For any predicate $\mathsf{V}$ on $(\cA\times\cB)\times(\cX\times\cY)$ and any product probability distribution $\pi$ on the question set $\cX\times\cY$, let $\mathcal{P}$ be an interactive entanglement-assisted 2-party quantum communication protocol for $G^{\otimes N}$. If $\mathcal{P}$ has total communication $cN$ for $c < 1$, the success probability of $\mathcal{P}$ is at most
\[ 
    \left(1-\frac{\nu}{2} + 2\sqrt{c}\right)^{\Omega\left(\frac{\nu^2N}{\log(|\cA|\cdot|\cB|)}\right)}
\]
where $\nu=1-\omega_q (G)$.
\end{lemma}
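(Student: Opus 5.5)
The statement is quoted from Jain and Kundu \cite{jain2022direct}, so in the paper it can simply be invoked. If I had to reprove it, I would follow the information-theoretic route to parallel repetition for free games (the anchoring/conditioning method of Jain, Pereszl\'enyi and Yao), extended to carry a communication transcript.

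\textbf{Why the naive reduction is too weak.} By teleportation over the shared entanglement, it suffices to consider $cN$ bits of classical communication. One could then copy the proof of \cref{lem:classical parallel repetition theorem with leakage}: guess the transcript using shared randomness at cost $2^{-cN}$ and apply a quantum parallel repetition theorem for free games. This yields a bound of the shape $2^{cN}\cdot(1-\nu^2)^{\Omega(N/\log(|\cA||\cB|))}$. That is meaningful only when $c$ is small relative to $\nu^2/\log(|\cA||\cB|)$, and it does not give the stated form $1-\frac{\nu}{2}+2\sqrt c$. So I would instead argue coordinate by coordinate.

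\textbf{Inductive conditioning step.} Let $C\subseteq[N]$ be a set of coordinates, built up greedily, and let $W_C$ be the event that the protocol wins on every coordinate in $C$.
\begin{enumerate}
\item If $\Pr[W_C]$ is already small, we are done.
\item Otherwise, I would show that some coordinate $i\notin C$ has $\Pr[\text{win}_i\mid W_C]\le 1-\frac{\nu}{2}+2\sqrt c$.
\item Iterating this for $\Omega(\nu^2N/\log(|\cA||\cB|))$ steps then gives the bound in \cref{lem:quantum dpt}.
\end{enumerate}

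\textbf{Proving the coordinate bound.} I would use chain rules for relative entropy.
\begin{enumerate}
\item Conditioning on $W_C$ costs at most $\log(1/\Pr[W_C])$ plus $|C|\log(|\cA||\cB|)$, spread over the $N-|C|$ coordinates.
\item The transcript contributes at most $cN$ bits of information about the inputs, so on average about $c$ bits per coordinate.
\item For a typical $i$, the question pair $(x_i,y_i)$ conditioned on $W_C$ and the transcript is therefore close to $\pi$. Because $\pi$ is product, it stays close to product conditioned on either party's side information.
\item Alice's (resp.\ Bob's) register, conditioned on $x_i$ (resp.\ $y_i$), is close to the unconditioned state. By Uhlmann's theorem, there are local unitaries taking the unconditioned joint state to the conditioned one.
\item Converting the information bounds via Pinsker gives a closeness loss of order $\sqrt c$ from the communication. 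The terms coming from $\Pr[W_C]$ and $|C|$ are made at most $\nu/2$ by the choice of the stopping point.
\end{enumerate}

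\textbf{Embedding.} From this structure I would build a zero-communication entangled strategy for a single copy of $G$:
\begin{enumerate}
\item Alice and Bob share the conditioned state, with the transcript sampled from a shared copy.
\item They apply the Uhlmann unitaries depending on $x$ and $y$.
\item They measure as the protocol does in coordinate $i$.
\end{enumerate}
This strategy wins with probability at least $\Pr[\text{win}_i\mid W_C]-\frac{\nu}{2}-2\sqrt c$. Since no zero-communication strategy beats $\omega_q(G)=1-\nu$, the coordinate bound follows.

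\textbf{Main obstacle.} The hardest step is the correlated sampling of the transcript together with the Uhlmann rotations. The transcript depends on both inputs and was generated interactively, so it cannot be reproduced exactly without communication. I would first try to treat the transcript as part of the anchoring side information, using the product structure of $\pi$ so that it can be generated from shared randomness with only $O(\sqrt c)$ statistical loss. I expect exactly this loss to produce the additive $2\sqrt c$ term in the final bound.
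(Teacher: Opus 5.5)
The paper gives no proof of this lemma. It is imported from Jain--Kundu and used as a black box, so your opening sentence is all the paper does. Your reconstruction, however, has a genuine gap at exactly the step you flag, and the repair you propose would not work as stated.

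The problem is the currency in which you pay for communication. A mutual-information bound such as $I(T:XY)\le cN$ (really $2cN$ with entanglement) does not survive conditioning on $W_C$, an event that may have probability $2^{-\Omega(N)}$. From $D(\rho\|\sigma)\le k$ one cannot infer $D(\rho_{\mid E}\|\sigma)\le k+\log(1/\Pr[E])$, whereas from the operator inequality $\rho\le 2^k\sigma$ one can. So you cannot simply add ``$c$ bits per coordinate'' to the budget $\frac{\log(1/\Pr[W_C])+|C|\log(|\cA||\cB|)}{N-|C|}$. The JPY estimate works only because the unconditioned state is \emph{exactly} a zero-communication state, so the only price of conditioning is $\log(1/\Pr[E])$.

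Generating the transcript from shared randomness ``using the product structure of $\pi$'' is also unsupported. In an interactive entanglement-assisted protocol, $\Pr[t\mid x,y]$ is a quantum correlation, not of the rectangle form $\alpha_t(x)\beta_t(y)$. Product inputs therefore give you nothing here, and any closeness would again need an information bound that is stable under conditioning.

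The missing idea is the transcript-guessing trick you discarded, used \emph{inside} the conditioning argument rather than as a black box.
\begin{enumerate}
\item Teleport the messages; this costs $2cN$ classical bits, not $cN$.
\item Let the players share a uniform string $R$ of that length, and let each abort iff its own outgoing messages differ from $R$. Non-abort is an intersection of two local events and has probability exactly $2^{-2cN}$. Conditioned on it, the joint distribution of inputs and outputs is exactly that of $\mathcal{P}$.
\item Condition on $E_C=\{\text{no abort}\}\cap W_C$. The underlying strategy has no communication, so the JPY machinery applies verbatim with $\log(1/\Pr[E_C])\le 2cN+\log(1/\Pr_{\mathcal{P}}[W_C])$. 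The extra $2cN$ costs $O(c)$ per coordinate, and Pinsker turns it into the $O(\sqrt{c})$ term.
\item The embedded one-copy strategy has neither communication nor abort, since non-abort is baked into the shared conditioned state. Hence its success probability is at most $1-\nu$.
\item Since $\Pr[\mathrm{win}_i\mid E_C]=\Pr_{\mathcal{P}}[W_{C\cup\{i\}}]/\Pr_{\mathcal{P}}[W_C]$, the induction runs on $\mathcal{P}$'s own probabilities.
\end{enumerate}
This is why $2^{-2cN}$ enters additively in the relative-entropy budget instead of multiplying the final bound. It yields a threshold $c\lesssim\nu^2$ independent of $\log(|\cA||\cB|)$. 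The same zero-communication-with-abort viewpoint underlies the efficiency bound in which Jain--Kundu phrase their general theorem.

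Equivalently, you may keep your information-theoretic framing if you measure communication by an operator inequality. For instance, $\rho_{XB_0M}\le 2^{2m}\,\rho_X\otimes\bigl(\rho_{B_0}\otimes\mathbb{I}_M/2^m\bigr)$ holds for an $m$-qubit message $M$ sent to a Bob whose register $B_0$ is independent of $X$, iterated over messages. This is the same trick in max-divergence form.
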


The above statement can also be regarded as a parallel repetition theorem for the quantum value with leakage.

Natarajan and Zhang \cite{natarajan2023quantum} showed that there exists a quantum free game protocol for $\RE$. Denote a two-prover one-round $\MIP^*$ protocol as an $\AM^*(2)$ protocol if the question distribution is uniform. Note that uniform distribution is a special case of product distributions.

\begin{lemma}[\cite{natarajan2023quantum}, see also Theorem 2.2 in \cite{mastel2024two}]\label{lem:qfg}
    There exists an $\AM^*(2)$ protocol for the Halting Problem with constant length questions and $\poly\log (n)$ length answers, perfect completeness and soundness $\frac{1}{2}$.
\end{lemma}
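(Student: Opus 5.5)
This lemma is imported from \cite{natarajan2023quantum}, so I would not rebuild the construction. The plan is to take the compressed $\MIP^*=\RE$ protocol of \cite{ji2020mip*} and check that the Natarajan--Zhang transformation turns it into a free game with the stated parameters. The main points to verify are that the question distribution is uniform, hence product as \cref{lem:quantum dpt} requires, and that the soundness and length bounds come out as claimed.

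The argument proceeds in four steps.
\begin{enumerate}
\item Start from the $\MIP^*$ protocol for the Halting Problem from \cite{ji2020mip*}. After question reduction via introspection and answer reduction via PCP composition, this is a two-prover one-round game. Its question distribution is conditionally linear, generated from $\polylog(n)$ uniform bits, and its answers have length $\polylog(n)$. It has perfect completeness and constant soundness.
\item Apply the Natarajan--Zhang step. Because the distribution is conditionally linear, each prover's question is a linear function of a uniformly random seed, with the function depending on a short uniform ``type'' register. The verifier then sends each prover only a constant-length uniform string. This string selects one test from a constant-size family: a sampling test, a consistency test, or a low-degree/Pauli test. In each test the provers themselves supply the seed-dependent data, and the rigidity of the Pauli-basis test certifies that this data is distributed as in the original game. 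Both prover questions are independent and uniform, so the game is an $\AM^*(2)$ game with constant-length questions.
\item Bound the answers. They now contain the self-sampled questions, of length $\polylog(n)$, together with the original answers, also $\polylog(n)$. The total stays $\polylog(n)$.
\item Transfer the value guarantees. Completeness is preserved because honest provers share EPR pairs and sample their questions correctly. For soundness, the rigidity analysis shows that any strategy with value close to $1$ can be rounded to a strategy for the original game. This gives soundness bounded away from $1$. Since the questions are only constant length, I would restore soundness $1/2$ with constantly many parallel repetitions, which keeps the game free and the questions constant length; the anchored parallel repetition of \cite{bavarian2022anchored} would serve instead if a polynomial rate were needed.
\end{enumerate}

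I expect the main obstacle to be the soundness transfer in the last step. Free games are exactly where the verifier cannot correlate the two questions, so the robust self-testing error must stay controlled by a constant independent of $n$. For this I would rely on the quantitative rigidity bounds stated in \cite{natarajan2023quantum} and restated as Theorem~2.2 of \cite{mastel2024two}, rather than re-deriving them.
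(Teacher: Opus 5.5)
The paper gives no proof of this lemma. It imports the statement, soundness $1/2$ included, from \cite{natarajan2023quantum} (restated as Theorem~2.2 of \cite{mastel2024two}), so the citation is the whole argument. Your reconstruction goes further, and its last step fails as written. In step~4 you go from soundness $1-\delta$ to $1/2$ with ``constantly many parallel repetitions'' because the questions are short. But the number of copies needed is governed by the answer length, not the question length. The available entangled parallel-repetition and direct-product bounds for free games, including the Jain--Kundu bound quoted as \cref{lem:quantum dpt}, have exponent $\Omega(N/\log(|\cA|\cdot|\cB|))$ up to factors polynomial in $\delta$. Anchoring \cite{bavarian2022anchored} does not help: its exponent also scales inversely with the answer length, and it is unnecessary for free games anyway. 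With $\polylog(n)$-bit answers you therefore need $N=\Omega(\polylog(n))$ copies, and the questions become $\polylog(n)$ bits long. This is exactly what the $\log^c n$ term in the $k\cdot\max\{\ell,\log^c n\}$ copies of \cref{thm:mip* with leakage} pays for. So your route gives either constant-length questions with soundness $1-\delta$, or soundness $1/2$ with polylogarithmic questions, but not both.

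The fix is to take the soundness from the cited theorem, as the paper does. Alternatively, observe that the only use of \cref{lem:qfg} is in \cref{thm:mip* with leakage}. There any constant gap $\nu=1-\omega_q(G)>0$ suffices, because the $k\cdot\max\{\ell,\log^c n\}$ copies already absorb the answer length while keeping the question size $O(\max\{\ell,\log^c n\})$. Two smaller points in step~2:
\begin{enumerate}
\item The quantum low-degree test of \cite{ji2020mip*} asks points and lines of polylogarithmic length, so it cannot be one of your $O(1)$-bit test labels. You need a rigidity test with constant-size questions, whose answers are linear in the number of EPR pairs.
\item Independently chosen labels produce incompatible test pairs, which the verifier must accept automatically. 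With $O(1)$ labels this costs only a constant factor in the gap, but you should say so.
\end{enumerate}
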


\begin{theorem}\label{thm:mip* with leakage}
    There exist constants $c$ and $c'$ such that for any $\ell  \le \poly(n)$,
    \[
        \MIP^*[O(\max\{\ell, \log^c n\}), O(\max\{\ell, \log^c n\} \cdot \log^{c'} n ),\ell] \supseteq \RE \enspace.
    \]
\end{theorem}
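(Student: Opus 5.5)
We take the game $G_z$ for the Halting Problem given by \cref{lem:qfg} and play $N$ copies of it in parallel, with $N = k\max\{\ell, \log^{c} n\}$ for a large constant $k$. Here $c$ will be tuned so that the exponent in \cref{lem:quantum dpt} becomes large enough.

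\textbf{Parameters.} The questions of $G_z$ have constant length, so $G_z^{\otimes N}$ has question length $O(N) = O(\max\{\ell,\log^c n\})$. The answers of $G_z$ have length $\polylog(n)$, say at most $\log^{c'} n$, so $G_z^{\otimes N}$ has answer length $O(\max\{\ell,\log^c n\}\cdot\log^{c'} n)$. The question distribution of $G_z^{\otimes N}$ is uniform, hence product, so $G_z^{\otimes N}$ is still a free game. Sampling questions and evaluating the predicate coordinatewise take $\poly(n)$ time, since $\ell\le\poly(n)$.

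\textbf{Completeness.} If $z$ is in the language, the honest entangled strategy for $G_z$ wins with probability $1$. Running it independently on each of the $N$ copies, using $N$ copies of the shared state and no communication, wins $G_z^{\otimes N}$ with probability $1$. So $\omega_q(G_z^{\otimes N})=1$.

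\textbf{Soundness.} Suppose $z$ is not in the language, so $\nu = 1-\omega_q(G_z)\ge 1/2$. Any cheating strategy that communicates $\ell$ bits is an interactive entanglement-assisted protocol for $G_z^{\otimes N}$ with communication $\ell \le N/k = c_0 N$, where $c_0 = 1/k$. The footnote in the introduction lets us treat qubit communication as classical via teleportation, so classical bits are the only case to handle. We choose $k$ large enough that $2\sqrt{c_0}\le 1/8$. Then the base in \cref{lem:quantum dpt} satisfies
\[
1-\frac{\nu}{2}+2\sqrt{c_0} \le 1-\frac14+\frac18 = \frac78 .
\]
The exponent is $\Omega(\nu^2 N/\log(|\cA||\cB|))$, where $\log(|\cA||\cB|) = O(\log^{c'} n)$ refers to the base game $G_z$. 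Hence the success probability is at most
\[
(7/8)^{\Omega(N/\log^{c'} n)} .
\]

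\textbf{Main obstacle.} The remaining step is to make this bound at most $1/2$, which requires $N/\log^{c'} n \ge C$ for a suitable constant $C$. This step is why $N$ carries the $\log^c n$ term. Taking $c\ge c'$ and $k$ large gives $N\ge k\log^{c'} n$, so the exponent is at least a large constant and the bound drops below $1/2$. The extra $\log^c n$ term also ensures that the regime $\ell < \log^{c'} n$ is covered, where $\ell$ alone would not make the exponent large.

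The other point to check is that $\log(|\cA||\cB|)$ in \cref{lem:quantum dpt} refers to the single-copy game, so that the exponent is linear in $N$; this is the form in which we use it. If one instead wants the soundness error to be exponentially small in $\ell$, the same computation gives that as well. This establishes the inclusion in \RE\ for the Halting Problem, and the statement follows for all of $\RE$ by the polynomial-time reduction to the Halting Problem, which changes the parameters only polynomially and is absorbed by $\log^c n$ with $n$ the reduced instance size.
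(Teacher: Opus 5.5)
Your proposal is correct and matches the paper's proof. Both parallel-repeat the Natarajan--Zhang free game of \cref{lem:qfg} $k\max\{\ell,\log^c n\}$ times and bound soundness with the Jain--Kundu direct product theorem (\cref{lem:quantum dpt}), using the $\log^c n$ term so that the exponent $\nu^2 N/\log(|\cA||\cB|)$ is a large constant. Your version makes explicit the constants the paper leaves implicit (e.g.\ $2\sqrt{1/k}\le 1/8$ and $c\ge c'$) and the reduction from $\RE$ to the Halting Problem.
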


\begin{proof}
    Let $G$ be a non-local game corresponding to the $\MIP^*$ protocol for $\RE$ in \cref{lem:qfg}. Then, let $c$ and $k$ be sufficiently large constants and consider a $\MIP^*$ protocol corresponding to $G^{\otimes k \cdot \max\{\ell, \log^c n\}}$. Since perfect completeness holds in \cref{lem:qfg}, it also holds in the parallel repetition game. Applying \cref{lem:quantum dpt} will prove that soundness\footnote{To be precise, we here consider soundness against $\ell$-bit communication which can be simulated using $l$-qubit communication.} also holds. The question size of $G^{\otimes k \cdot \max\{\ell, \log^c n\}}$ is $O(\max\{\ell, \log^c n\})$ and the answer size is $O(\max\{\ell, \log^c n\} \cdot \log^{c'} n )$ for some $c'>0$ from \cref{lem:qfg}.
\end{proof}

\subsection{Low-soundness PCPs and one-way leakage}\label{sec:PCP2025}

We have seen from the proof of \cref{lem:classical parallel repetition theorem with leakage} that we can allow $\ell$ bits of leakage as long as we have a soundness parameter $s \leq 2^{-\ell}$. Therefore, from \cref{fact:PCPgame direct}, we have the following corollary.

\begin{corollary}\label{cor:immediate}
$\MIP_{1,s \cdot 2^{\ell}}[\log n + O(\log\log n), O(1),\ell] \supseteq \NP\enspace.$
\end{corollary}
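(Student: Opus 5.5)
We combine \cref{cor:good-PCP}, \cref{fact:PCPgame direct}, and the abort argument from the proof of \cref{lem:classical parallel repetition theorem with leakage}. Here $s>0$ is an arbitrary constant soundness of the underlying PCP. By \cref{cor:good-PCP}, every language in $\NP$ lies in $\PCP_{1,s}[\log n + O_s(\log\log n),2]_\Sigma$ with $|\Sigma| = O_s(1)$. \cref{fact:PCPgame direct} turns this into a two-prover one-round game $G$: the verifier samples an edge $e=(u,v)$ of the label cover instance using $\log n + O(\log\log n)$ random bits, sends $u$ to the first prover and $v$ to the second, and checks $(a,b)\in\Phi_e$. The answer size is $\log|\Sigma| = O(1)$. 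Completeness is perfect, and without leakage $\omega_c(G)\le s$ on no-instances. It remains to show that $\ell$ bits of (two-way) communication raise the acceptance probability by at most a factor $2^{\ell}$.

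\textbf{Soundness with leakage.} Fix any protocol $\mathcal{P}$ with at most $\ell$ bits of communication that achieves acceptance probability $p$. Without loss of generality $\mathcal{P}$ is public-coin and exactly $\ell$ bits long (pad if needed). We build a no-communication strategy $\mathcal{P}'$ as in \cref{lem:classical parallel repetition theorem with leakage}. The provers share a uniformly random string $t\in\{0,1\}^{\ell}$, which is independent of the questions and the public coins, and each treats $t$ as the transcript. Each prover checks whether its own messages, computed from its question, the public coins and the prefix of $t$, agree with $t$. If they agree, it outputs the answer $\mathcal{P}$ would give given transcript $t$; otherwise it outputs an arbitrary answer.

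\textbf{Key step and main obstacle.} The crux is that the event "$t$ equals the true transcript of $\mathcal{P}$ on these questions and coins" has probability exactly $2^{-\ell}$, and is independent of whether $\mathcal{P}$ wins. The reason is that for fixed questions and coins the true transcript is a deterministic string, and $t$ is uniform and independent of it. On this event both provers' consistency checks pass and they reproduce $\mathcal{P}$'s answers, so $\mathcal{P}'$ wins with probability at least $2^{-\ell}p$. Answers produced outside this event can only add winning probability, so abort behaviour causes no difficulty. Since shared randomness does not increase the classical value, we get $2^{-\ell}p \le \omega_c(G)\le s$, hence $p\le s\cdot 2^{\ell}$.

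Finally, the question length $\log n+O(\log\log n)$, the answer length $O(1)$ and the leakage $\ell$ match the statement, and the verifier runs in polynomial time because the reduction in \cref{lem:PCP2025} is polynomial-time. This places $\NP$ in $\MIP_{1,s\cdot 2^{\ell}}[\log n + O(\log\log n), O(1),\ell]$. The bound is meaningful whenever $s<2^{-\ell}$, for instance constant $\ell$ with $s$ chosen small enough.
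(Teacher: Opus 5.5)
Your proposal is correct and follows the paper's route. The paper obtains this corollary in the same way: it takes the 2-query low-soundness PCP of \cref{cor:good-PCP}, turns it into a game via \cref{fact:PCPgame direct}, and reuses the guess-the-transcript (abort) argument from the proof of \cref{lem:classical parallel repetition theorem with leakage} to show that $\ell$ bits of communication raise the winning probability by at most a factor $2^{\ell}$; your version of that step (padded fixed-length transcript, uniform guess $t$ independent of questions and coins) is a slightly more careful rendering of the paper's sketch.
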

However, this method only works for 2-query PCPs. In this section, we present a more general method for converting a PCP into a $\MIP$ protocol that is resilient to leakage. First, we prove the following theorem for arbitrary CSPs.

\begin{theorem}\label{thm:CSP-to-MIP}
    Let $\ell$ be a positive constant, and $\Psi$ be a CSP instance of size $\poly(n)$ on $n$ variables, such that every constraint is of arity $k$ over an alphabet $\Sigma$. Suppose $\Psi$ is promised to be either fully satisfiable or any assignment to the variables will satisfy at most $\frac{1}{2^{\ell+1}}$-fraction of its constraints. Then,
    \begin{align*}
        \Psi \in \MIP^{\rightarrow}_{1,1-\frac{1}{2k}} \left[ O(\log n), k \log |\Sigma|,\ell \right] \enspace.
    \end{align*}
\end{theorem}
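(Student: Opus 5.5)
The protocol is the standard clause-versus-variable (oracularization) game. The verifier samples a uniformly random constraint $C$ of $\Psi$, with variables $v_1,\dots,v_k$, and sends $C$ to $P_1$, who replies with an assignment in $\Sigma^k$ to these variables. The verifier then picks a uniformly random index $i\in[k]$, sends $v_i$ to $P_2$, and receives a value in $\Sigma$. It accepts iff $P_1$'s assignment satisfies $C$ and agrees with $P_2$'s answer on $v_i$. Since $\Psi$ has $\poly(n)$ constraints on $n$ variables, the questions have length $O(\log n)$, and the answers have length at most $k\log|\Sigma|$. Completeness is immediate: if $\Psi$ is satisfiable, both provers answer according to a fixed satisfying assignment and the verifier accepts with probability $1$.

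For soundness, fix malicious provers. Shared randomness only produces a convex combination, so we may take them deterministic. Then $P_1$ is given by functions $f(C)\in\Sigma^k$ and $m(C)\in\{0,1\}^\ell$, the latter being the leaked message. $P_2$ is given by $2^\ell$ assignments $A_m\colon \text{vars}\to\Sigma$, one for each possible message $m$. The leakage happens before $P_2$ sees its question, and $P_2$'s behaviour on message $m$ is a single global assignment $A_m$.

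Call a constraint $C$ \emph{good} if $C$ is satisfied by at least one of $A_1,\dots,A_{2^\ell}$. By the soundness promise, each $A_m$ satisfies at most a $2^{-(\ell+1)}$ fraction of the constraints. A union bound over the $2^\ell$ assignments then shows that at most half of the constraints are good. For a bad constraint $C$, let $m=m(C)$. Either $f(C)$ does not satisfy $C$, and the verifier rejects, or $f(C)$ satisfies $C$ while $A_m$ does not. In the latter case $f(C)$ and $A_m$ restricted to $C$ differ on at least one of the $k$ variables, so the consistency check fails with probability at least $1/k$. Since a bad constraint is drawn with probability at least $1/2$, the rejection probability is at least $\frac12\cdot\frac1k$, and the acceptance probability is at most $1-\frac{1}{2k}$, as claimed.

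The main obstacle is modelling the one-way leakage correctly, namely making sure $P_2$'s strategy decomposes into $2^\ell$ fixed assignments that do not depend on $P_1$'s question beyond the message. This holds exactly because the message is the only information $P_2$ receives from $P_1$, and it is what makes the union bound over $2^\ell$ assignments go through. The threshold $2^{-(\ell+1)}$ is chosen precisely so that this union bound leaves half of the constraints bad; $\ell$ constant is not needed for this argument, only for the polynomial running time of the verifier, which is trivial here anyway.
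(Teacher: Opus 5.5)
Your proposal is correct and follows essentially the same route as the paper. Both use the constraint-versus-variable protocol, reduce to deterministic provers, and view $P_2$ as $2^\ell$ fixed assignments indexed by the leaked message; a union bound then leaves half the constraints unsatisfied by all of them, and on those the consistency check fails with probability at least $1/k$. Your write-up is more explicit than the paper's. Your remark that constancy of $\ell$ is never used also matches how the paper later applies the theorem with $\ell\log n$ bits of leakage.
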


\begin{proof}
    The verifier for the satisfiability of the CSP instance $\Psi$ works as follows.
    \begin{tcolorbox}[title=Algorithm~{1}: $\MIP$ protocol for verifying the satisfiability of $\Psi$]   
    \paragraph{The verifier $V$ having access to $\Psi$.}
    \begin{enumerate}
        \item Sample a constraint $e=(u_1,...,u_{k})$ of $\Psi$ uniformly at random and send it $P_1$.
        \item After receiving the assignments $a(u_1),...,a(u_k)$ sample a random element $i\in [k]$.
        \item Ask $P_2$ for the assignment to $u_i$.
        \item Accept if and only if the answers of the two provers are consistent and if the assignment satisfies the constraint.
    \end{enumerate}
\end{tcolorbox}

In the protocol above, the question size is $O(\log n)$ because the size of CSP is $\poly(n)$, and thus the verifier can specify a constraint and a vertex with $O(\log n)$ bits.

It is clear that if $\Psi$ is satisfiable, then the verifier accepts after interacting with honest provers.
Now suppose that $\Psi$ is unsatisfiable, and in particular, by the assumption of the theorem, we know that at any assignment will satisfy at most $2^{-(\ell+1)}$-fraction of the constraints. Therefore, for \emph{every set of $2^\ell$ assignments}, $1/2$ of the constraints are not satisfied by all of them.

We can assume the provers are deterministic. Therefore, $P_2$ upon receiving the leakage $\ell$\footnote{Here, we are slightly abusing the notation and using $\ell$ for the leakage content as well as its length.} fixes its strategy, which we denote by $P_2^\ell$. Note that for each value of $\ell$, the strategy $P_2^\ell$ corresponds to an assignment to the variables. 
Hence, regardless of the set of assignments the provers agree on before the protocol starts, a $\frac{1}{2}$-fraction of constraints will remain unsatisfied. Therefore, in the first step, with probability $1/2$, the verifier samples such a constraint and forces $P_1$ to lie. After $P_2$ receives the leakage, there is at least $1/k$ chance that the verifier samples a variable that $P_2$ will answer inconsistently with respect to $P_1$'s answer. Hence, the soundness is at most $1-1/2k$, which concludes the proof.
\end{proof}

\subsection{Concrete examples}
Having \cref{thm:CSP-to-MIP}, we can apply the transformation to two different PCP constructions to get concrete bounds.

\subsubsection{Verifying $\NP$ with $O(1)$ bits of leakage}
Our first example is to use the recent construction of \cite{bafna2025quasi}. Before stating the result, we show how to construct an $\MIP$ protocol for the following promise label cover problem. 

\begin{lemma}\label{lem:leakage}
    Suppose $\Psi$ is a given label cover of size $S$ and alphabet $\Sigma$ such that either $\val(\Psi)=1$ or $\val(\Psi) \leq \frac{1}{2^{\ell+1}}$ for $\ell \in \mathbb{N}$. Then, there exists a $\MIP$ protocol for verifying the satisfiability of $\Psi$ in which the verifier $V$ has the following properties:
    \begin{itemize}
        \item Perfect completeness: $V$ accepts with probability 1 if $\val(\Psi)=1$.
        \item Soundness against $\ell$ bit of leakage: $V$ accepts with probability at most $\frac{3}{4}$ even if $\ell$ bit of one-way leakage among the cheating provers is allowed.
    \end{itemize}  
\end{lemma}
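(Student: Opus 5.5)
This follows by applying \cref{thm:CSP-to-MIP} with $k=2$. A label cover instance is a CSP of arity $2$: each edge $e=(u,v)$ gives a constraint on the pair $(u,v)$. To put everything on one alphabet, we view the variables as ranging over $\Sigma = \Sigma_L \cup \Sigma_R$, or over the larger of the two alphabets. Any label outside the relevant alphabet is treated as violating every constraint it appears in, so the promise is unchanged: either $\val(\Psi)=1$, or every assignment satisfies at most a $2^{-(\ell+1)}$ fraction of the edges.

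The verifier is the one from Algorithm~1, specialised to $k=2$.
\begin{enumerate}
\item Sample an edge $e=(u,v)$ uniformly and send it to $P_1$, who returns $(a(u),a(v))$.
\item Pick $i\in\{1,2\}$ uniformly and ask $P_2$ for the label of the chosen endpoint.
\item Accept if and only if $(a(u),a(v))\in\Phi_e$ and $P_2$'s answer matches $P_1$'s label on that endpoint.
\end{enumerate}

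\textbf{Completeness.} Honest provers answer according to a fully satisfying assignment, so the verifier accepts with probability $1$.

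\textbf{Soundness.} We repeat the argument of \cref{thm:CSP-to-MIP}. Assume both provers are deterministic. Then $P_2$'s behaviour is one of $2^\ell$ assignments $A_m$, indexed by the leaked message $m\in\{0,1\}^\ell$. Each $A_m$ satisfies at most a $2^{-(\ell+1)}$ fraction of the edges. By a union bound, all $2^\ell$ of them together satisfy at most half the edges, so at least half the edges are violated by every $A_m$.

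Suppose the verifier samples such an edge $e$. If $P_1$'s answer violates $\Phi_e$, the verifier rejects. Otherwise $P_1$'s pair differs from $A_m$ on $e$ for whichever $m$ it leaks. So on at least one endpoint, $P_2$'s answer (given by $A_m$) disagrees with $P_1$'s label, and the verifier picks that endpoint with probability at least $1/2$. The verifier therefore rejects with probability at least $\tfrac12\cdot\tfrac12=\tfrac14$, and the acceptance probability is at most $1-\tfrac{1}{2k}=\tfrac34$.

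\textbf{Main obstacle.} The delicate step is that $P_1$ chooses the leaked message $m$ after seeing $e$. The argument survives this because the bad edge set was defined using \emph{all} $2^\ell$ assignments at once, so no choice of $m$ avoids it. The remaining bookkeeping is that the question size is $O(\log S)$ and the answer size is $O(\log|\Sigma|)$, and that \cref{thm:CSP-to-MIP} needs no assumption beyond a polynomial-size instance and constant $\ell$.
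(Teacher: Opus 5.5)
Your proposal is correct and matches the paper's proof, which consists of the single observation that label cover is an arity-$2$ CSP, so that \cref{thm:CSP-to-MIP} with $k=2$ gives soundness $1-\frac{1}{2k}=\frac34$. Your write-up unpacks that theorem's union-bound argument for $k=2$ and also handles the $\Sigma_L$ versus $\Sigma_R$ alphabet mismatch, which the paper leaves implicit.
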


\begin{proof}
The proof follows immediately from \cref{thm:CSP-to-MIP}, since label cover is a CSP of arity 2.
\end{proof}

Having this claim, we are now ready to prove the following theorem.
\begin{theorem}\label{thm:lbitPCP}
    For any constant $\ell \in \mathbb{N}$,
    \[
        \MIP^{\rightarrow}_{1,\frac{3}{4}}[2\log n + O_\ell(\log\log n), O_\ell(1),\ell] \supseteq \NP \enspace.
    \]
\end{theorem}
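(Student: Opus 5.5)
The plan is to combine \cref{lem:PCP2025} with \cref{lem:leakage} (equivalently, \cref{thm:CSP-to-MIP} with $k=2$) and check the parameters. Fix a constant $\ell \in \mathbb{N}$ and set the target soundness to $s = 2^{-(\ell+1)}$, which is a constant depending only on $\ell$. Given an instance of an $\NP$ language, first reduce it in polynomial time to a 3-SAT instance $\phi$ of size $n$. Since every $\NP$ language reduces to 3-SAT with only polynomial blowup, and the stated bounds are in terms of $\log n$, I would phrase the bounds directly in terms of the 3-SAT size; a polynomial change in $n$ only changes constants. Then apply \cref{lem:PCP2025} with this $s$ to obtain a label cover instance $\Psi$ of size at most $n(\log n)^{c(s)}$ and alphabet size $O_\ell(1)$. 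It satisfies $\val(\Psi)=1$ if $\phi$ is satisfiable and $\val(\Psi)\le 2^{-(\ell+1)}$ otherwise.

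Next, run the protocol of \cref{thm:CSP-to-MIP} on $\Psi$, viewing label cover as a CSP of arity $k=2$. \cref{lem:leakage} then gives perfect completeness and soundness $1-\frac{1}{2k} = \frac34$ against $\ell$ bits of one-way leakage, which matches the requirements of \cref{def:oneway-leakage}. The verifier runs in polynomial time because the reduction is polynomial-time and the instance has quasi-linear size.

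The remaining step, and the only one needing care, is the question length. To $P_1$ the verifier sends an edge $e=(u,v)$. The number of edges is at most the instance size $n(\log n)^{c}$, and in the worst case an edge is encoded as a pair of vertex names. This gives $2\log(n(\log n)^c) = 2\log n + O_\ell(\log\log n)$ bits, which explains the leading constant $2$ rather than the $\log n$ obtainable from an edge index. If the instance is given as an edge list, an edge index alone costs only $\log n + O(\log\log n)$ bits, so the stated bound holds either way. To $P_2$ the verifier sends one vertex, costing at most $\log n + O_\ell(\log\log n)$ bits. The answers are at most $2\log|\Sigma| = O_\ell(1)$ bits.

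The main obstacle is the bookkeeping rather than any conceptual step. I have to make sure the size bound of \cref{lem:PCP2025} bounds both the vertex and edge counts, so that the question size is as claimed. I also have to confirm that the constant $c(s)$ and the alphabet size depend only on $\ell$ through $s=2^{-(\ell+1)}$, which justifies the $O_\ell$ notation.
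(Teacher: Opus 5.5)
Your proposal is correct and follows the paper's proof. You instantiate \cref{lem:PCP2025} with soundness $2^{-(\ell+1)}$, then apply \cref{lem:leakage} (i.e.\ \cref{thm:CSP-to-MIP} with $k=2$) to get soundness $3/4$; your question-length accounting ($2\log n + O_\ell(\log\log n)$ from naming both endpoints of the edge sent to $P_1$) makes explicit what the paper leaves implicit. One small caution: a polynomial blowup $n\mapsto n^d$ would change the leading constant $2$, not just lower-order terms, so the bound relies on measuring $n$ as the 3-SAT instance size, which is the convention you and the paper both adopt.
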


\begin{proof}
     Let $\phi$ be any 3-SAT instance. From \cref{lem:PCP2025}, for any constant $\ell \in \mathbb{N}$, there exists a label cover instance $\Psi$ of size at most $n (\log n)^c$ for some constant $c(\ell)$ and alphabet size that depends only on $\ell$ such that $\val(\Psi)=1$ if $\phi$ is satisfiable and $\val(\Psi)<\frac{1}{2^{\ell+1}}$ if $\phi$ is not satisfiable. The theorem follows by applying \cref{lem:leakage} to $\Psi$.
\end{proof}

\subsubsection{Verifying $\NP$ with $O(\log n)$ bits of leakage}

In this part, we turn our attention to a particular setting of interest in $\MIP$ protocols for $\NP$, where we allow the provers to have $O(\log n)$ bits of one-way communication, and our goal is to optimize the question and answer size. Recall that from \cref{thm:mip-leakage-NP}, we can set $\ell = O(\log n)$ and get $\NP \subseteq \MIP[O(\log^2 n), O(\log n), O(\log n)]$. Here, we prove a similar result, albeit with different parameters, which we will discuss shortly. The main theorem of this section is the following.

\begin{theorem}\label{thm:logn-leakage}
    Let $\ell$ be a positive constant. Then,
    \begin{align*}
        \MIP^{\rightarrow}_{1,1-\frac{1}{\poly(\log\log n)}} \left[O(\log n), {\log n}\cdot {\poly(\log\log n)},\ell \log(n)\right] \supseteq \NP \enspace.
    \end{align*}
\end{theorem}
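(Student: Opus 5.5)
The plan is to feed the low-error PCP of \cref{lem:low-error-PCP} into the transformation of \cref{thm:CSP-to-MIP}. The only extra point is to check that the soundness argument of that theorem still works when $\ell$ is replaced by $\ell\log n$ bits of leakage, since \cref{thm:CSP-to-MIP} was stated for constant $\ell$.

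\textbf{Step 1 (the CSP).} Given an $\NP$ language, take the PCP verifier of \cref{lem:low-error-PCP}. It uses $O(\log n)$ random bits, makes $k=(\log\log n)^{O(1)}$ queries over an alphabet with $|\Sigma|=n^{1/(\log\log n)^{O(1)}}$, and has soundness error $n^{-d}$ for some constant $d>0$. View this verifier as a CSP $\Psi$. There is one constraint per random string, hence $\poly(n)$ constraints. Each constraint has arity $k$ over $\Sigma$, and the variables are the $\poly(n)$ proof positions. Completeness gives a fully satisfying assignment for yes instances. For no instances, every assignment satisfies at most an $n^{-d}$ fraction of the constraints.

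To make the threshold $2^{-(\ell\log n+1)}=1/(2n^{\ell})$ hold for an arbitrary constant $\ell$, I first amplify the soundness. Sequentially repeat the verifier $t=\lceil(\ell+1)/d\rceil$ times, which is a constant, with independent randomness. This multiplies the randomness by $t$, so it is still $O(\log n)$. The number of queries becomes $tk=\poly(\log\log n)$, the alphabet is unchanged, and the soundness becomes $n^{-td}\le 1/(2n^{\ell})$ for large $n$. If the lemma's $1/\poly(n)$ already allows choosing the exponent, this step is unnecessary.

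\textbf{Step 2 (the protocol and its parameters).} Run Algorithm~1 on $\Psi$. The questions name a constraint or a variable, so they have length $O(\log n)$. The answer length is
\[
  tk\log|\Sigma| \;=\; \poly(\log\log n)\cdot\frac{\log n}{(\log\log n)^{O(1)}}\;\le\;\log n\cdot\poly(\log\log n).
\]
Completeness is perfect.

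\textbf{Step 3 (soundness with $L=\ell\log n$ bits of leakage).} I redo the argument in the proof of \cref{thm:CSP-to-MIP} with nonconstant leakage. That argument never used that $\ell$ is constant. Fix deterministic provers. The possible leakage strings number $2^{L}=n^{\ell}$, and each fixes an assignment $P_2^{m}$ for the second prover. Every one of these assignments satisfies at most a $1/(2n^\ell)$ fraction of the constraints. By a union bound, at least half of the constraints are violated by all $n^\ell$ assignments simultaneously.

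Now suppose the verifier samples such a constraint $e$. Whatever message $m$ the first prover sends, its answers on $e$ either violate $e$, in which case the verifier rejects, or satisfy $e$. In the latter case they differ from $P_2^{m}$ on some variable of $e$. The first prover's answers and the message $m$ are fixed before the index $i$ is drawn, so the verifier picks a disagreeing variable with probability at least $1/(tk)$. The acceptance probability is therefore at most
\[
  1-\frac{1}{2tk}\;=\;1-\frac{1}{\poly(\log\log n)}.
\]

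The step I expect to need the most care is the parameter bookkeeping in Steps 1 and 2. Two things must be checked: that the soundness exponent of \cref{lem:low-error-PCP} can be pushed below $n^{-\ell}$ at constant cost in randomness and query overhead, and that the $(\log\log n)^{O(1)}$ factor in the alphabet exponent keeps the answer size at $\log n\cdot\poly(\log\log n)$. The soundness argument itself carries over directly from \cref{thm:CSP-to-MIP}.
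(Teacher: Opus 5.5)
Your proposal is correct and takes the same route as the paper: view the verifier of \cref{lem:low-error-PCP} as a CSP, use constant-fold sequential repetition to push soundness below $1/(2n^{\ell})$, and run the protocol of \cref{thm:CSP-to-MIP} with arity $\poly(\log\log n)$. Your explicit check that the union-bound soundness argument of \cref{thm:CSP-to-MIP} never uses that the leakage is constant, and so covers $\ell\log n$ bits, fills in a step the paper leaves implicit when it says \cref{clm:CSP-to-MIP} follows immediately.
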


Before moving on to proving the theorem, we would like to emphasize that this statement differs from what we get from \cref{thm:mip-leakage-NP} in two aspects: The question size remains $O(\log n)$ (compared to $O(\log^2 n)$) and the answer size is $\log n \cdot \poly(\log\log n)$ (compared to $O(\log n)$). The main downside is, however, that the soundness error will not remain constant.

We will start by proving the following claim.

\begin{claim}\label{clm:CSP-to-MIP}
    Let $\ell$ be a positive constant, and $\Psi$ be CSP instance of size $\poly(n)$ on $n$ variables, such that every constraint is of arity $\poly(\log\log n)$, and the alphabet size is $n^{1/\poly(\log\log n)}$. Suppose $\Psi$ is promised to be either fully satisfiable or at most $\frac{1}{2n^\ell}$-fraction of its constraints is satisfied. Then,
    \begin{align*}
        \Psi \in \MIP^{\rightarrow}_{1,1-\frac{1}{\poly(\log\log n)}} \left[ O(\log n), {\log n}\cdot {\poly(\log\log n)},\ell \log(n) \right] \enspace.
    \end{align*}
\end{claim}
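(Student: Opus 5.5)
We reuse the protocol of \cref{thm:CSP-to-MIP} (Algorithm~1) unchanged and redo its parameter accounting for the new regime. The verifier samples a uniformly random constraint $e=(u_1,\dots,u_k)$ of $\Psi$, where now $k=\poly(\log\log n)$, and sends it to $P_1$. It receives the $k$ labels, picks a uniformly random $i\in[k]$, asks $P_2$ for the label of $u_i$, and accepts iff the answers are consistent and the constraint is satisfied.

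\textbf{Complexity parameters.} We check these first.
\begin{itemize}
\item \emph{Question size.} Since $\Psi$ has $\poly(n)$ constraints and $n$ variables, a constraint index or a variable index takes $O(\log n)$ bits, so the question size is $O(\log n)$.
\item \emph{Answer size.} $P_1$ returns $k$ symbols of $\Sigma$. This costs
\[
k\log|\Sigma| = \poly(\log\log n)\cdot \frac{\log n}{\poly(\log\log n)} \le \log n\cdot\poly(\log\log n)
\]
bits; we only need the upper bound, so the two polynomials need not cancel. $P_2$ sends a single symbol, which costs even less.
\item \emph{Verifier time.} The verifier runs in polynomial time, since it only indexes into $\Psi$ and evaluates one constraint of arity $\poly(\log\log n)$ over an alphabet of polynomial size.
\item \emph{Completeness.} Honest provers answer according to a satisfying assignment, so completeness is perfect.
\end{itemize}

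\textbf{Soundness.} This follows the counting argument of \cref{thm:CSP-to-MIP}, now with $2^{\ell\log n}=n^\ell$ possible leakage strings.
\begin{enumerate}
\item Fix deterministic cheating provers; by convexity this is without loss of generality. For each leakage string $m\in\{0,1\}^{\ell\log n}$, $P_2$'s behaviour is a fixed function $A_m$ from variables to $\Sigma$, i.e.\ an assignment. This gives at most $n^\ell$ assignments $A_m$, all fixed before the question is sent.
\item Each $A_m$ satisfies at most a $\frac{1}{2n^\ell}$-fraction of the constraints. By a union bound, the set of constraints satisfied by at least one $A_m$ has density at most $n^\ell\cdot\frac{1}{2n^\ell}=\frac12$.
\item With probability at least $1/2$, the sampled $e$ is satisfied by no $A_m$. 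Condition on this event, and let $m$ be the string $P_1$ actually leaks. Either $P_1$'s labels violate $e$, and the verifier rejects, or they satisfy $e$. In the latter case they differ from $A_m$ restricted to $e$ in at least one coordinate, so the random $i$ hits a disagreement with probability at least $1/k$.
\item Hence the rejection probability is at least $\frac{1}{2k}$, and soundness is at most $1-\frac{1}{2k}=1-\frac{1}{\poly(\log\log n)}$.
\end{enumerate}

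\textbf{Main obstacle.} The delicate point is step 3: the leakage may depend on $e$ and on $P_1$'s answers. This is harmless, because the union bound in step 2 covers every $A_m$ simultaneously, and $P_2$ never learns $i$ before answering. We must also confirm that the threshold $\frac{1}{2n^\ell}$ exactly absorbs the $n^\ell$ leakage values, which requires the leakage to be exactly $\ell\log n$ bits (rounded appropriately).
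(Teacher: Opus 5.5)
Your proposal is correct and is exactly what the paper means by ``follows immediately from \cref{thm:CSP-to-MIP}'': you rerun Algorithm~1 with $k=\poly(\log\log n)$, and $2^{\ell\log n}=n^\ell$ leakage values against the $\frac{1}{2n^\ell}$ threshold again leave half the constraints unsatisfied by every one of $P_2$'s assignments, which gives rejection probability at least $\frac{1}{2k}$. Your parameter accounting (sending a constraint index of $O(\log n)$ bits, and answer length $k\log|\Sigma|\le \log n\cdot\poly(\log\log n)$) just makes explicit what the paper leaves implicit.
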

\begin{proof}
The proof follows immediately from \cref{thm:CSP-to-MIP}.
\end{proof}

\begin{proof}[Proof of \cref{thm:logn-leakage}]
    Let $\phi$ be an arbitrary 3-SAT instance on $n$ variables. By \cref{lem:low-error-PCP}, there exists a PCP proof of size $\poly(n)$ over an alphabet of size $n^{1/(\log\log n)^{O(1)}}$ such that a verifier with $(\log\log n)^{O(1)}$ queries can use to reject if $\phi$ is unsatisfiable with probability $1-\frac{1}{\poly(n)}$.

    Consider the CSP corresponding to the queries and the verification algorithm of the verifier. This CSP satisfies all the constraints of \cref{clm:CSP-to-MIP} (if the rejection probability of the verifier is not high enough, we can simply apply sequential repetition to make sure it is more than $1-\frac{1}{2n^\ell}$). This concludes the proof. 
\end{proof}

%% file: summary.tex
\section{Concluding remarks}\label{sec:conclusion}
In this paper, we investigated the problem of finding the complexity of multi-prover interactive proof systems where the provers are allowed to communicate bounded amount of messages to each other. We proposed and developed two techniques, one based on the parallel repetition results and the other based on gap amplification and low-soundness PCPs, to address the problem. In summary, using these techniques, we showed that we can still maintain the power of multi-prover interactive proof systems by increasing the question and answer sizes moderately. These results show that in particular, for real-world applications, we may not need to enforce spatial separation between provers to cross-examine them. Instead, we can still keep the verification power as long as we have a predetermined bound on the amount of communication they are allowed to have. We also showed an interesting connection between this scenario and the Sliding Scale Conjecture in the PCP literature, hoping that this might be an novel approach for making progress on that front. In addition, there are three ways one can interpret our results:

\paragraph{Adaptiveness in $\MIP$s:} It is a well-known fact that the power of multi-prover interactive proofs, and in particular, one-round $\MIP$s lies in the fact that the provers' answers are non-adaptive. Essentially, this follows from the fact that each prover is blind to the question the other prover receives. At the same time, if we allow complete adaptiveness between the messages, we are back to the power of single-prover interactive proofs, which is incomparable to the former ($\AM$ vs. $\NEXP$). However, one could ask, \emph{how much adaptiveness} is required for the cheating provers to be able to convince the verifier to accept an incorrect statement? Could it be the case that a single bit of adaptiveness is sufficient? Our results address this problem negatively. In a way, we show that if the amount of adaptiveness is bounded, then the power of multi-prover interactive proofs remains intact. However, there is more to be done, as it is still unclear where this transition in complexity appears.

\paragraph{Another application of PCPs:} PCPs undoubtedly have many interesting applications in complexity theory and cryptography, and our result can also be interpreted as a new one. We showed that one of the applications of reducing soundness error in PCPs, whether through parallel repetition or through novel constructions, can be viewed as mildly relaxing the no-communication assumption in the $\MIP$ view. Whether this can be useful for showing no-go results in PCP constructions or whether this idea can be used in real-world cryptographic constructions is a possibility that we leave for future research. In particular, it would be very appealing to know whether our approach can bring us any closer to finding a resolution for the Sliding Scale Conjecture in the PCP literature.

\paragraph{Proof of quantumness:} Recently, several protocols for proving quantumness have been proposed based on compiled non-local games \cite{kalai2023quantum,natarajan2023bounding,kulpe2025bound}. The core mechanism involves simulating a two-prover game using a single (supposedly quantum) prover by homomorphically encrypting the first question \cite{brakerski2018quantum,mahadev2020classical}. Ideally, an honest quantum prover can respond within the encrypted domain, whereas a dishonest classical prover cannot extract information from the first question to gain an advantage on the second. However, the security of these protocols relies on cryptographic assumptions. Our work provides an alternative approach. In the compiled setting, leakage between provers effectively requires the adversary to store information in memory for use upon receiving the subsequent question. This allows us to frame the compiled non-local game as sound against \emph{bounded-space} adversaries. While our current bounds may not yet yield a practical real-world protocol, given the low cost of classical memory, they demonstrate a theoretical path toward replacing cryptographic assumptions with complexity-theoretic ones, specifically space-complexity bounds.

\subsection{Open problems}\label{sec:open-problems}

Our work raises several open problems, which we leave for future work. We conclude this paper by naming a few of them here. 
\begin{itemize}
    \item An immediate open problem is to find the optimal tradeoff for allowing constant bits (or qubits) of leakage.
    \item Our technique based on low-soundness PCPs requires the additional assumption that the communication is one-way. We believe that this requirement is not necessary and it is a limitation of the technique. It would be interesting to generalize these tools so that they do not rely on this assumption anymore.
    \item Another open problem is to find a succinct $\MIP$ protocol for $\NP$ that has constant soundness against $O(\log n)$ bits of leakage. As mentioned in \cref{sec:overview}, we can achieve this if there is a positive answer to the Sliding Scale Conjecture. However, we do not know whether the other direction holds. Is it possible to show such a result without relying on \cref{conj:sliding-scale}?
\end{itemize}